\documentclass{article}
\usepackage{booktabs} % For formal tables
\usepackage[ruled]{algorithm2e} % For algorithms
\usepackage{enumitem}

\usepackage[letterpaper,top=1in,bottom=1in,left=1in,right=1in,marginparwidth=1.75cm]{geometry}
\usepackage{amsmath}
\usepackage{graphicx}
\usepackage{amssymb}
\usepackage{amsthm}
\usepackage{lineno}
\usepackage{tikz}
\usepackage{url}
\usepackage{standalone}
\usetikzlibrary{shapes,decorations}
\usetikzlibrary{arrows.meta}

\usepackage{todonotes}

\usepackage{comment}

\newtheorem{theorem}{Theorem}[section]
\newtheorem{definition}[theorem]{Definition}
\newtheorem{lemma}[theorem]{Lemma}

\newtheorem{proposition}[theorem]{Proposition}

\newtheorem{conjecture}[theorem]{Conjecture}

 \def\real{{\mathbb R}}

\newcommand{\NP}{\ensuremath{\mathsf{NP}}}

\newcommand{\Ptime}{\ensuremath{\mathsf{P}}}

\newcommand{\PPAD}{\ensuremath{\mathsf{PPAD}}}

\newcommand{\CLS}{\ensuremath{\mathsf{CLS}}}

\newcommand{\RP}{\ensuremath{\mathsf{RP}}}

\DeclareMathOperator{\poly}{poly}
\SetAlFnt{\small}
\SetAlCapFnt{\small}
\SetAlCapNameFnt{\small}
\SetAlCapHSkip{0pt}
\IncMargin{-\parindent}

\title{On the Complexity of Learning Nash Equilibria
}
\author{Oliver Biggar, Christos Papadimitriou, and Georgios Piliouras}
\date{}

\begin{document}

\begin{titlepage}

\maketitle

\begin{abstract}
\noindent
We know that the Nash equilibria of a game cannot be computed efficiently unless $\Ptime = \PPAD$.  But can they be {\em learned?}  Are there dynamics that (1) can be computed efficiently by the players at each strategy profile and (2) are guaranteed to converge to the Nash equilibria?  This is a question as ancient as the Nash equilibrium itself, and antedates by many decades current complexity considerations about it.  It was recently proved in \cite{milionis_impossibility_2023} that no such dynamics can exist in general; however, the game used in that proof is degenerate, and a strong assumption of uniform convergence to a continuum of Nash equilibria is employed. We point out that both assumptions are necessary for that proof, because Nash-convergent dynamics do exist, which converge to all Nash equilibria in non-degenerate games; in fact we describe %, we believe for the first time, 
two very different families of such dynamics.  However, both of these families are intractable to compute locally.  We formulate a complexity theoretic \emph{Impossibility Conjecture}: if a locally tractable Nash-convergent dynamic exists then $\Ptime=\PPAD$.  This is a novel kind of conjecture, combining topology and complexity, because the dynamic is only required to be tractable locally --- it could take exponentially many steps to converge, so long as the work done at each step is polynomial. We validate this conjecture for the two specific families of dynamics introduced, showing that their tractability would imply complexity collapses, namely $\NP=\RP$ and $\PPAD= \Ptime$. Next, we show that any locally tractable dynamic which possesses a locally tractable \emph{Lyapunov function} cannot exist unless $\PPAD = \CLS$. En route to these results, we settle the complexity of a problem of interest in its own right, namely finding a Nash equilibrium of a game that lies on a given affine subspace: it is polynomial for affine spaces of constant dimension and NP-hard for polynomially large dimensions. Finally, regarding the most general impossibility conjecture, not requiring a Lyapunov function, we provide a complexity-theoretic explanation why it may be difficult to settle: we introduce a \emph{Proving Game} to demonstrate that black-box reductions cannot distinguish between convergent and non-convergent dynamics in polynomial time. %Finally, we establish that the Impossibility Theorem for global Nash convergence of \cite{milionis_impossibility_2023} also holds for nondegenerate games, unless $PPAD = CLS$. 
Overall, our results suggest that the barrier to Nash learning is not the non-existence of a vector field for some games, but the intractability of locally computing this vector field, a novel and subtle form of intractability. And yet, the strongest form of this intractability remains open.
\end{abstract}

\end{titlepage}

\section{Introduction}
The Nash equilibrium, proposed by John F.~Nash in 1950 \cite{nash_equilibrium_1950}, is arguably the most influential idea in the history of economics.  It launched Game Theory, inspired Arrow and Debreu to prove the fundamental theorems of welfare economics, while its universality --- the fact that every game is guaranteed to have one --- enabled economists everywhere, while designing new markets, institutions, or mechanisms, to contemplate the equilibrium with the certainty, afforded by Nash's theorem, that such an equilibrium exists.

When game theory came under scrutiny by computer scientists in the late 1990s, certain serious computational shortcomings of the Nash equilibrium became clear. The Nash equilibrium was eventually proved intractable to compute in normal-form games even with two players \cite{daskalakis_complexity_2009,chen2006settling,etessami2010complexity}, while the multiplicity of Nash equilibria --- the other computationally unwieldy aspect of the Nash equilibrium concept --- appears to have no principled, computationally feasible remedy \cite{goldberg_complexity_2013}. A common defense of the Nash equilibrium by economists in the face of such computational criticism is that {\em ``despite all, the players will eventually get there.''}  Notice that this argument shifts the debate from complexity theory to the realm of {\em learning dynamics in games,} see Section 2 for a brief introduction; by {\em learning dynamics} we mean dynamics that are meant to model the reaction of agents to the experience of repeated play. Indeed, dozens of different learning dynamics have been proposed by game theorists, economists, and computer scientists over the decades (some even preceding Nash's theorem \cite{brown1949some}), in many cases explicitly motivated by this quest for Nash-convergent dynamics. In fact, Nash's own 1950 proof of existence \cite{nash_equilibrium_1950} starts by defining a particular dynamic, now often called {\em the Brown-von Neumann-Nash dynamic} \cite{sandholm2010population}, which, however, is known to cycle. Implicit in the requirements of a `sensible' learning dynamic is \emph{local tractability}, meaning that for any given game, the displacement $\phi(x)$ --- or the gradient $\dot \phi(x)$ for a continuous-time dynamic --- can be computed at any mixed strategy profile $x$ in time that is polynomial in the description of the game. However, to date there is no known locally tractable dynamic $\phi$ that is also \emph{Nash-convergent}, in that it is guaranteed to converge to a Nash equilibrium in all games and from all starting points. 

%that is both {\em locally tractable and Nash-convergent.}  A dynamic is locally tractable if, for any given game, the displacement $\phi(x)$ --- or the gradient $\dot \phi(x)$ for a continuous-time dynamic --- can be computed at any mixed strategy profile $x$ in time that is polynomial in the description of the game; and it is Nash-convergent if it has the property that it is guaranteed to converge to a Nash equilibrium in all games and from all starting points. 

% maybe add a sentence connecting this to learning

In view of the failure of the research community over seven decades to come up with Nash-convergent dynamics, it was conjectured in 2019 \cite{papadimitriou_game_2019} that this absence reflects an even more serious deficiency of the Nash equilibrium: perhaps there can be {\em no} dynamic that is guaranteed to converge to the Nash equilibrium in all games --- surely not a locally tractable one. Such failure would make the Nash equilibrium even less palatable as a solution concept. This line of work culminated more recently in the proof of an {\em impossibility theorem} \cite{milionis_impossibility_2023} proving the conjecture of \cite{papadimitriou_game_2019}: There is a game in which there is no dynamic whatsoever --- locally tractable or not --- that converges\footnote{The precise requirements of convergence in this theorem are somewhat technical; we go into more detail shortly.} to precisely the game's Nash equilibria. %Global convergence means that the dynamic is guided to the Nash equilibria of the game by a {\em Lyapunov function,} see Section 2. 
The decades long quest by economists for a Nash-convergent dynamic has been unproductive for concrete mathematical reasons!

However, the counterexample used in the impossibility proof of \cite{milionis_impossibility_2023} is somewhat unsatisfying, for two reasons. First, the game in question is highly {\em degenerate}. A game is degenerate if there are mixed profiles that have more pure best responses than the dimension of the game implies, due to numerical coincidences.  As this definition suggests, almost all games are non-degenerate. % --- and in view of this observation the impossibility theorem of \cite{milionis_impossibility_2023} appears to lose its bite a little.
Secondly, the convergence definition was rather strong --- the dynamic was required to converge uniformly to \emph{all} Nash equilibria in the same connected component. As the authors of \cite{milionis_impossibility_2023} point out, both assumptions are \emph{necessary} for impossibility: there is a simple dynamic (``go directly to a fixed Nash equilibrium") which converges to Nash equilibria in \emph{every game} from every starting point. The impossibility theorem proves that in some degenerate games, this convergence is only uniform on a subset of a Nash equilibrium component.
%the theorem shows thatrequires that in some degenera, and \emph{uniformly} Nash-convergent in every \emph{nondegenerate} game. 
What makes this ``go to a Nash equilibrium" toy dynamic unappealing as a model of learning, however, is its lack of \emph{local tractability} --- as we will show, computing one step of this dynamic is just as hard as computing the Nash equilibrium itself, which cannot be done in polynomial time unless $\Ptime = \PPAD$.

%Secondly, the convergence in question is \emph{uniform}, in the sense that there is a Lyapunov function showing the way (called \emph{global Nash convergence} in \cite{milionis_impossibility_2023}, see Section~\ref{sec: nash-convergent dynamics} for a detailed definition). As the authors of \cite{milionis_impossibility_2023} point out, both assumptions are \emph{necessary}: it is possible to define a simple dynamic (``go directly to a fixed Nash equilibriaum") which is Nash-convergent in \emph{every game}, and \emph{uniformly} Nash-convergent in every \emph{nondegenerate} game. What this toy dynamic lacks, however, is \emph{local tractability} --- it cannot be computed in polynomial time, unless $\Ptime = \PPAD$.

%In fact, degeneracy was {\em needed} for the Impossibility Theorem to hold, because, as the authors point out, in a nondegenerate game it is \emph{always} possible to define a Nash-convergent dynamic. In Section 2 we describe rigorously --- we believe for the first time --- two such constructions, and in fact we show here that both dynamics are {\em strongly} Nash convergent --- that is, there is a Lyapunov function showing the way (see Section~\ref{sec: preliminaries}). What these constructions fail, however, is \emph{local tractability} --- they cannot be computed in polynomial time.

%These constructionsHowever, we also note that both constructions of Nash convergent dynamics require exponential time to compute at any given point, in the worst case.  

In view of this, there seems to be only one research direction that promises to restore the spirit and intent of the impossibility theorem: {\em Change the topic back to complexity theory,} and argue that the obstacle to `natural' Nash-convergent dynamics is due to complexity. % are {\em complexity obstacles} in the way to defining, for any nondegenerate game, a locally tractable Nash convergent dynamic.
This is the goal of the present paper.

Our main finding is a sequence of theorems showing that locally tractable Nash-convergent dynamics with certain properties do not exist 
%Our main finding is that any Nash-convergent dynamics equipped with a locally tractable witness to its convergence (in the form of a \emph{Lyapunov function}) cannot be itself locally tractable 
in all games unless complexity classes collapse. First, we define two natural families of Nash-convergent dynamics that we dub Type 1 and Type 2.  Type 1 is the simplest, and embodies the example above of ``go directly to the equilibrium'' --- an incomplete version of it was already known to the authors of the impossibility theorem \cite{milionis_impossibility_2023}. Type 2 is a more sophisticated geometric construction exploiting the well-known topological fact that non-degenerate games have an {\em odd} number of Nash equilibria \cite{wilson_computing_1971}. In both cases, we assume for the sake of contradiction that we are given oracle access to an arithmetic circuit for computing the dynamic, and prove that, if indeed the circuits work as promised, a collapse of complexity classes must happen: $\Ptime = \PPAD$ for Type 1, and $\NP = \RP$ for Type 2. % The implication of these {\em black-box proofs} is that each of these types of Nash-convergent dynamics needs, in order to be computed at a point, exponential-time computation in the worst case.  Here by {\em black-box proof} we mean a proof that assumes that a locally tractable Nash-convergent dynamic exists, and concludes that a complexity class collapse must occur.
Both of these classes are what we call \emph{uniformly Nash-convergent} (see Definition~\ref{def: nash convergence}) meaning that the progress of convergence to the Nash equilibrium can be represented by a Lyapunov function. We next prove that any locally tractable uniformly Nash-convergent dynamic, if one existed, cannot possess a locally tractable \emph{Lyapunov function witness} to its convergence unless $\PPAD = \CLS$.

In the process of proving that these complexity classes collapse, we need to solve in polynomial time the following interesting computational problem related to the Nash equilibrium, which we believe has never been considered before: {\em Given a game and a line (or half line) in the strategy space, is there a Nash equilibrium on this line?}  We prove that this problem can be solved in polynomial time. In fact, the generalization in which, instead of a line, we are given an affine subspace of dimension $d$ --- and we seek the Nash equilibria on this space --- can also be solved in polynomial time, as long as the dimension is $O(1)$.  If the dimension is any fractional polynomial in the description of the game, the problem becomes NP-hard.  %We conjecture that the problem is polynomial even if the dimension of the affine space is $O({\log n\over \log \log n})$.

These findings motivate us to articulate an important conjecture: {\em If there is a Nash-convergent dynamic that can be computed locally in polynomial time for every game, then $\Ptime = \PPAD$.}  We believe that this is a central conjecture in the field of game dynamics and Nash learning, whose proof will be an important step towards understanding the way in which games, learning dynamics, and computation interact.

We also believe that proving this {\em Impossibility Conjecture} will require the development of new types of complexity-theoretic arguments, for the following reason. The existence of a polynomial-time computable Nash convergent dynamic does not immediately contradict the established difficulty of computing Nash equilibria, simply because the dynamic may very well take exponentially many steps to converge. A new kind of hybrid topological/complexity-theoretic argument seems to be needed --- one that treats learning algorithms as topological objects --- in order to establish that local polynomial-time computation creates some kind of complexity tension at remote points of a compact domain.

In fact, we prove a result that we believe reflects the mathematical and conceptual difficulty described above in complexity-theoretic terms. We define a proof protocol that we call {\em the proving game,} or {\bf PG}, which captures the essence of a large class of black-box proofs of the Impossibility Conjecture.  We prove that the adversary in {\bf PG} has a winning strategy, a result suggesting that no black-box proof of the impossibility conjecture for nondegenerate games is likely.  %This result suggests that {\em such proof is not possible in the black-box style} of complexity proofs of impossibility.  
Black-box proofs are a very common way to prove complexity class collapses, and, as far as we can see, the only kind used so far in this line of work. %In a  black-box proof a dynamic conjectured not to exist is accessed strictly through its outputs to inputs provided by an algorithm, and the proof establishes that this algorithm, using these outputs, can be used, for example, to solve by computations belonging in one class a problem that is complete for another class believed to strictly contain the first. 

To summarize, our main contributions are the following:

\begin{itemize}[nosep]
\item We provide two concrete Nash-convergent dynamics, and prove that both are intractable to compute locally under complexity assumptions.

\item As part of this proof we characterize the complexity of finding a Nash equilibrium of a game lying on an affine subspace.

\item We prove that any locally tractable uniformly Nash-convergent dynamic cannot possess a locally tractable Lyapunov function unless $\PPAD = \CLS$.

\item We articulate the new, complexity-theoretic Impossibility Conjecture.

\item We introduce the Proving Game and establish that the adversary has a forced win, demonstrating that the Impossibility conjecture may be difficult to prove.
\end{itemize}

\noindent The rest of this paper is organized as follows: After preliminary definitions, in Section~\ref{sec: nash-convergent dynamics} we discuss the definition of Nash-convergence, and define the two types of Nash-convergent dynamics sketched in the introduction.  In Section~\ref{sec: hardness} we prove complexity collapse results for them and for any Nash-convergent dynamic equipped with a Lyapunov function.  In Section~\ref{sec: impossibility for uniform} we articulate the Impossibility Conjecture, and we prove a result suggesting that, in the black-box proof style of complexity proofs --- the main tool available for such quests ---, the Impossibility Conjecture cannot exist.  Finally, in the last section we discuss the implications of these results and contemplate the way forward.

\section{Preliminaries} \label{sec: preliminaries}
\subsection*{Games}
A game $G$ has some number $p$ of players and a set of strategies $S_i$ for each of the $p$ players. A {\em strategy profile} is a choice of a strategy for each of the $p$ players, whereas a {\em mixed profile} is a choice of one distribution $[p_j: j\in S_i]$ for each player $i$.  The game $G$ is specified by the {\em utilities}, a set of $p$ $p$-dimensional tensors $u^i, i=1\ldots,p$ of rational numbers where $u^i_{j_1}, \ldots, u_{j_p}$ is the benefit to player $i$ if the $p$ players play strategies $j_1,\ldots,j_p$ respectively.  The utilities of a mixed profile are the expectations of the utilities under the independent probabilities of the profile, one for each player.  The set of all mixed profiles of the game is denoted by $X_G$, or simply $X$ when the game is implicit.  The {\em size of game $G$,} denoted $|G|$, is the sum of the lengths of the representations of the utilities  of the players, assumed to be rationals in $[0,1]$.  Up to a polynomial and assuming reasonably small accuracy of utilities and a finite number of players, the total number of strategies of a game, denoted by $n$, is a useful surrogate of its size.  

\medskip{\bf Degeneracy} is simplest to define in two-player games \cite{von_stengel_computing_2002}: A bimatrix game is nondegenerate if there is no mixed strategy with support $k$ that has more than $k$ pure best responses.  It is more subtle in the case of three or more players, for which no consensus exists on the precise definition of degeneracy --- see \cite{von_stengel_computing_2002,mckelvey1996computation} also for various treatments of the subject.  Since the Nash learning problem is already challenging and important for the intractable case of two players, we shall focus our discussion on two-player games. However, we shall also be pointing out that several positive results hold for games with more players. Telling whether a two-player game is degenerate is known to be NP-hard, a property it inherits from the degeneracy of linear programming \cite{du2013complexity,murty1987some}, but this interferes surprisingly little with our narrative.% --- we do point out that finding a Nash equilibrium remains PPAD-complete even if the game is nondegenerate.

A {\em Nash equilibrium} is a mixed profile such that there is no player and no strategy of this player such that the player improves their utility by changing the probability of the strategy in the mixed profile. The Nash equilibrium is of paramount importance in Game Theory, and has been considered for 75 years as the default ``meaning of the game,'' the place where the action is destined to end up. However, as recounted in the introduction, intractability and non-uniqueness have shaken the field's faith in the Nash equilibrium. A common counterargument of its adherents is that pairs or groups of players, or institutions and societies, will eventually and spontaneously end up at a Nash equilibrium --- despite the adverse evidence.  This brings in the important subject of {\em dynamical systems}.

\subsection* {Dynamical Systems}
A {\em continuous-time dynamical system,} or a {\em continuous-time dynamic,} on a compact real set $X$ is a continuous function $\phi(x,t): X\times \real_+\mapsto X$ satisfying $\phi(\phi(x,t),t'))=\phi(x, t+t')$ denoting the point at which $x$ will be after time $t$.  Such a dynamic is typically represented  in terms of its time derivative $\dot\phi(x) := \frac{\mathrm{d}}{\mathrm{d}t} \phi(x)$.  %, and in particular through an {\em arithmetic circuit} computing $ \phi$ on input $x$ (see, for example, \cite{CLS} for an appropriate definition of arithmetic circuits that avoids the creation of exponentially large results). 
Also of great interest are {\em  discrete-time dynamics,} which are continuous maps $\phi(x,t): X\times \mathbb{Z}_+\mapsto X$ satisfying the same property $\phi(\phi(x,t),t'))=\phi(x, t+t')$. A discrete-time dynamic can be defined as repeated function application of a single continuous function $f: X \to X$, with $\phi(x,t) := f^{t}(x) = f(f(\dots(f(x))\dots)$. While these two definitions differ in important ways, the two theories of continuous- and discrete-time dynamical systems are surprisingly parallel --- for example, the impossibility result mentioned in the introduction, as well as the Fundamental Theorem of Dynamical System to be explained soon, hold for both. {\em In this paper, we shall work mostly with discrete-time dynamics;} however we occasionally use continuous time dynamics, and an exposition with the continuous-time variety would also be possible.

The purpose of the mathematical field of dynamical systems is to predict --- and, implicitly, compute --- the limit behavior of the system as time goes to infinity, when the system is started at some initial point in $X$. The simplest possible limit behavior would be for the dynamical system to converge to a fixed point, or, a bit more generally, a {\em cycle}. In fact, the Poincar\'e-Bendixson Theorem from the early 1900s states that, for two-dimensional dynamical systems (e.g., for $2\times 2$ games), these are the only possible limit behaviors (intuitively, because planarity constrains the system's behavior).  In three or more dimensions, however, far more complex limit behaviors prevail, including the phenomenon of chaos. Researchers strove for over a century to restore the simplicity of Poincar\'e-Bendixson Theorem to higher dimensions, and these efforts culminated in the 1980s with the Fundamental Theorem of Dynamical Systems due to Charles Conley \cite{conley_isolated_1978}.  It states that all dynamical systems will eventually cycle, provided one redefines cycling: Call two points {\em chain recurrent} if, for every $\epsilon$ there is an $n$ such that the system can go from $a$ to $b$ and back by at most $n$ steps of the dynamical system, intermingled with arbitrary $\epsilon$ jumps.  That is, two points are chain recurrent if one can reach the other through the computation of the system's steps, {\em if} an agent can manipulate appropriately the round-off errors.  The definition for continuous-time systems replaces the steps with trajectory segments of unit time length. Chain recurrence is an equivalence relation, and Conley proved that the system will {\em globally converge} to one of the equivalence classes, called {\em chain recurrent components.}  This means that it will not only converge, but will do so guided by a {\em Lyapunov function,} a function $L$ mapping $X$ to $\real_+$ such that, $L$ is constant on equivalence classes under chain recurrence, and for all other points $x\in X$, $L(x) > L(\phi(x))$. Note that Conley's Theorem nearly restores the Poincar\'e-Bendixson Theorem, since it establishes that dynamical systems always converge to objects that are arbitrarily close to cycles. %Global convergence is stronger, and more desirable, than plain convergence because, intuitively, the latter --- by the definition of chain recurrence --- 

Dynamical systems are important in game theory. Starting even before Nash \cite{brown1949some}, dozens of dynamical systems have been proposed by game theorists and economists, often with the explicit goal to discover natural economic behaviors that eventually lead to the Nash equilibrium. Unfortunately, these systems fail to converge to Nash equilibria in general games, with convergence results typically limited to classes with special structure such as potential games \cite{monderer_potential_1996}, two-player zero-sum games or two-player $2\times n$ games \cite{berger_fictitious_2005}. %The Impossibility Theorem of \cite{milionis_impossibility_2023} states that there is a highly degenerate two-player $3\times 3$ game for which there is no dynamical system that \emph{globally Nash-convergent}. This convergence requirement is fairly strong (see Definition~\ref{def: nash convergence}), and the goal in the present paper is to use Complexity Theory to extend the reach of the Impossibility Theorem to nondegenerate games and to less restrictive definitions of Nash convergence.  

\subsection*{Nash Convergence}
What does it mean for a game dynamic $\phi$ to be Nash convergent? This question is critical to this paper, and there are a variety of definitions one might consider. Here we lay out several candidate definitions, and discuss how they relate to each other and to the Impossibility Theorem of \cite{milionis_impossibility_2023}.

\begin{definition} \label{def: nash convergence}
We describe several definitions, from most basic to most demanding.
    \begin{enumerate} [label=(\Alph*)]
\item \emph{Nash-convergence}: for all $x\in X_G$, $\lim_{t\rightarrow\infty} \phi(x,t)$ exists and is a Nash equilibrium. 

\item {\em Uniform Nash convergence} is (A) with a {\em Lyapunov function}, $L:X_G\mapsto [0,1],$ such that, if $x$ is not a Nash equilibrium, $L(\phi(x,t))< L(x)$ for all $t>0$.   %Any uniformly converging dynamic can be made to also be Nash stationary, as above.

\item[(A$^+$)/(B$^+$)] is (A) or (B) with the addition of {\em Nash stationarity}, the requirement that all Nash equilibria are fixed points of the dynamic\footnote{Note that (A) already implies that every fixpoint is a Nash equilibrium --- Nash stationarity requires the converse.}.

\item Finally, and most strongly is what is called \emph{global Nash-convergence} in \cite{milionis_impossibility_2023}; this is (B$^+$) with the additional requirement that, if $N$ and $N'$ are Nash equilibria in the same connected component of the Nash equilibrium set, then $L(N)=L(N')$.
\end{enumerate}
\end{definition}

The first definition (A) is the most obvious definition of Nash-convergence. However, this `vanilla' Nash-convergence does allow for certain counterintuitive behaviors, such as convergence to {\em homoclinic and heteroclinic orbits}\footnote{A heteroclinic orbit is a cycle of fixpoints joined by connecting trajectories (a homoclinic orbit is a trajcetory from a fixpoint to itself). Even if each point converges to a Nash equilibrium, the overall behavior is cyclic and no individual Nash equilibrium is stable.} of Nash equilibria. Uniform Nash-convergence (B) rules out these possibilities, and guarantees that at least some Nash equilibria are (Lyapunov) stable under the dynamic (nearly points do not move away from the equilibrium). Nash stationarity (types (A$^+$) and (B$^+$)) seems at first to impose some additional constraints, but in fact does not. In actuality, it is straightforward to show that any dynamic satisfying (A) or (B) can be readily converted into one satisfying stationarity by a change of speed --- one multiplies the gradient of $\phi$ by a nonnegative continuous function that is zero at exactly the Nash equilibria of the game.  There are many standard such functions, including the Brown-von Neumann-Nash function introduced later in this paper. Because this change of speed can be done in a locally tractable way, Nash stationarity does not affect the local tractability of $\phi$.

The Impossibility Theorem of \cite{milionis_impossibility_2023} applies to the most restrictive definition (C); they show that, for a particular $3\times 3$ degenerate game, there is no globally Nash-convergent dynamic. On the other hand, (locally intractable) dynamics do exist for the convergence definitions (A) and (B) (optionally with Nash-stationarity) --- an example is the Type 1 dynamics we define in the next section. The goal of this paper is to use Complexity Theory to extend the reach of the Impossibility Theorem in two directions: (1) nondegenerate games, and (2) dynamics of kinds (A) and (B) (recall that the additional requirement of Nash stationarity does not make the problem any harder).

%\subsection{Nash convergent dynamics}

\section{Two families of Nash-Convergent Dynamics} \label{sec: nash-convergent dynamics}

In this section we discuss two families of Nash-convergent dynamics which exist in any non-degenerate game. In the next section we show that these dynamics are intractable to compute, and we conjecture that this is a general property of Nash-convergent dynamics.

\paragraph{Type 1.}  Let $G$ be a game, and let $Z$ be its set of Nash equilibria. %If the game is nondegenerate, $Z$ is finite. %Pick any Since the game $G$ is assumed nondegenerate, it follows that its Nash equilibria comprise a set of distinct points, and we call this set $Z$\footnote{This property, as well as the ``oddness'' property needed in the next subsection on Type 2, also hold for all conceptions of degeneracy in games with more than two players.}. 
A dynamic of Type 1 is very simple: we pick one $z\in Z$ from these equilibria, and define a dynamic $\phi_{G,z}$ in which each mixed profile $x\in X$ moves along a straight line in the direction of $z$ with constant speed. This $\phi$ is continuous and uniformly Nash-convergent, with the distance from $z$ acting as a Lyapunov function.

\begin{figure}
    \centering
    \begin{tikzpicture}[
    inward arrows/.pic = {
        \node (c) at (0,0) [draw, circle,fill,inner sep=2pt] {};
        \foreach \angle in {0, 90, 180, 270} {
            \draw[->, shorten > = 0.1cm] (\angle:.6cm) -- (c);
        }
    }]

    % Outside boundary
    \draw (0,0) to (-2,2) to (0,4) to (4,4) to (6,2) to (4,0) to cycle;
    \draw[very thick] (.5,0) to (.5,4);
    \draw[very thick] (3.5,0) to (3.5,4);

    %\node[circle,fill,inner sep=2pt] (1) at (-.5,1.5) {};
    \draw (-0.7,1.7) pic [rotate=-10] {inward arrows};
    \node[circle,fill,inner sep=2pt] (2) at (.5,2.5) {};
    \draw (2,2) pic [rotate=10] {inward arrows};
    %\node[circle,fill,inner sep=2pt] (3) at (2,2) {};
    \node[circle,fill,inner sep=2pt] (4) at (3.5,1.5) {};
    %\node[circle,fill,inner sep=2pt] (5) at (4.5,2) {};
    \draw (4.7,2) pic [rotate=-5] {inward arrows};

    \node at (-1.05,1.4) {$z_1$};
    \node at (0.1,2.1) {$z_2$};
    \node at (1.7,1.6) {$z_3$};
    \node at (3.1,1.1) {$z_4$};
    \node at (4.4,1.7) {$z_5$};

    \draw[->] (.4,3.2) to [in = 0, out = -90] (0.1,2.6);
    \draw[->] (.4,1.8) to [in = 0, out = 90] (0.1,2.4);
    \draw[->] (.6,3.2) to [in = 180, out = -90] (0.9,2.6);
    \draw[->] (.6,1.8) to [in = 180, out = 90] (0.9,2.4);

    \draw[->] (3.4,2.2) to [in = 0, out = -90] (3.1,1.6);
    \draw[->] (3.4,0.8) to [in = 0, out = 90] (3.1,1.4);
    \draw[->] (3.6,2.2) to [in = 180, out = -90] (3.9,1.6);
    \draw[->] (3.6,0.8) to [in = 180, out = 90] (3.9,1.4);

    \node at (-1.3,-.2) {$P_1$};
    \node at (0.5,-.2) {$P_2$};
    \node at (2,-.2) {$P_3$};
    \node at (3.5,-.2) {$P_4$};
    \node at (4.7,-.2) {$P_5$};

\end{tikzpicture}
    \caption{An illustration of the Type 2 dynamic. The space is divided into an odd number of polytopes $P_i$, each containing a Nash equilibrium $z_i$. The odd-numbered polytopes have dimension $d$, while the even-numbered polytopes have dimension $d-1$.}
    \label{fig:type 2 dynamic}
\end{figure}
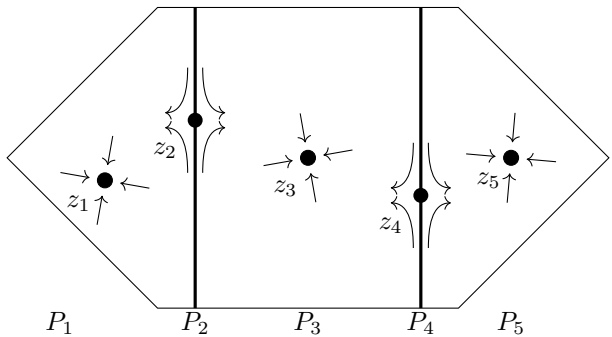

\paragraph{Type 2.}   We next present a different type of Nash-convergent dynamic guaranteed to exist in any non-degenerate game with any number of players. %This is a folk construction, which, however, has never appeared in print.  
To describe the dynamic, let $z_1,z_2,\ldots, z_{2k+1}$ be the  set of Nash equilibria of the game --- %off by Harsanyi's ``oddity theorem.'' T
that the total number of Nash equilibria is odd in a nondegenerate game is known as Wilson's oddness theorem \cite{wilson_computing_1971}.  We next identify a line $\ell=\{y + \lambda\cdot w: \lambda\in R\}$ where $y\in X$ and $w\in R^d - \{\bf{0}\}$.  Line $\ell$ is chosen so that all projections of the Nash equilibria on $\ell$ are {\em distinct}.  It is not hard to see that such an $\ell$ always exists.  For example, a random line satisfies this property almost certainly, and a simple deterministic construction is also possible: start from any line and perturb it appropriately if two or more projections of Nash equilibria on it coincide.

Now we divide $X$ into $2k+1$ polytopes, as follows (see Figure~\ref{fig:type 2 dynamic} for a schematic illustration). We define the hyperplanes $H_i: i = 1,\ldots, k$ as the hyperplanes normal to $\ell$ at the points $z_2, z_4,\ldots, z_{2k}$.  This divides $X$ into $k+1$ $d$-dimensional polytopes $P_1,P_3\ldots, P_{2k+1}$, and also defines $k$ $d-1$-dimensional polytopes $P_2, P_4, \ldots P_{2k}$ that are intersections of two consecutive odd-indexed polytopes. Notice that, for all $i$, the closed polytope $P_i$ contains only equilibrium $z_i$ if $i$ is even, and equilibria $z_{i-1}, z_i, z_{i+1}$ if $i$ is odd (with the obvious adjustments for the boundary cases $i=1,2k+1$).

We next define the dynamic $\phi_{G,\ell}$. Inside each even-indexed polytope $P_{2i}$, $: \phi_{G,\ell}(x) = x+ \alpha\cdot {(z_{2i}-x)}$,
%\over |z_{2i}-x|},$ 
where $\alpha < 1$.  That is, in even-indexed polytopes the dynamic moves towards the unique equilibrium.  

In the odd-indexed polytopes, the dynamic also moves towards the corresponding, odd-numbered equilibrium.  However, to ensure continuity with the neighboring even polytopes, the dynamic is perturbed so that continuity is restored: it starts equal to the dynamic of the even polytope when it is close to the boundary, and gradually it moves more and more in the direction of its own Nash equilibrium.

For more details, we divide every odd-indexed polytope $P_{2i+1}$, %with the exception of the extreme ones $P_1$ and $P_{2k+1}$, 
into two closed {\em chambers} $P^-_{2i+1}$ and $P^+_{2i+1}$ by the hyperplane normal to $\ell$ through $x_{2i+1}$, with $P^-_{2i+1}$ being the part containing $z_{2i}$.
%In even-numbered polyopes, containing Nash equilibrium $z$, as well as in the chambers $P_1^-$ and $P_{2k+1}^+$, the dynamic is simply moving towards the Nash equilibrium in the chamber:
%$\phi(x) = \alpha\cdot(z -x),$ for some $\alpha < 1$. 
The width of chamber $P^-_{2i+1}$ --- the distance between its two bounding hyperplanes normal to $\ell$ --- is denoted $w^-_{2i+1}$.  To define $\phi$ within chamber $P^-_{2i+1}$, let $d^-(x)$ be the distance of $x$ from the polytope $P_{2i}$, and let $x^-$ be the closest point of $x$ on $P_{2i}$. %, while $d^+(x)$ is the distance of $x$ from the intersection of the hyperplane normal to $\ell$ at $z_{2i+1}$, and $x^+$ is the closest point.
%recall that the distance of a point outside a polytope and the polytope is well defined and easy to compute.  
The dynamic inside the chamber is this: 

$$\phi_{G,\ell}(x) = {\frac{w^-_{2i+1}-d^-(x)}{w^-_{2i+1}}\cdot (x+(\phi(x^-)-x^-)} + \frac{d^-(x)}{w^-_{2i+1}}\cdot (x+ \frac{(z_{2i+1} - x)}{k}),$$
for some $k\gg 1$.

The dynamic at the $+$ chambers is defined completely analogously.

%In English, in even-indexed polytopes the dynamic moves towards the unique equilibrium, as in Type 1, while in odd-indexed polytopes the dynamic is an interpolation between moving towards the unique equilibrium of the polytope and the dynamic at the next even-indexed polytope, to ensure the continuity of the dynamic (see Figure ?? for a schematic illustration).  
Finally, for the two extreme odd polytopes, the dynamic is defined by a similar interpolation between (a) motion to the Nash equilibrium and (b) the dynamic in the adjacent even polytope, scaled by distance to the latter polytope.

Notice that, as expected, it is not clear how to compute this dynamic at any point if the Nash equilibria are not given.  It is easy to check that the dynamic is well-defined for all $G$ and appropriate $\ell$, that it is continuous by construction, and that its only fixpoints are the game's Nash equilibria:

\begin{proposition}
The dynamic $\phi_{G,\ell}$ defined above is uniformly Nash-convergent and Nash stationary (type (B$^+$) in Def.~\ref{def: nash convergence}).
\end{proposition}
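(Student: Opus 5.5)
Write $x = (x_1,\dots,x_d)$ for coordinates in the strategy space, let $\pi(x)$ be the scalar coordinate of the projection of $x$ onto $\ell$, and order the equilibria so that $\pi(z_1)<\pi(z_2)<\dots<\pi(z_{2k+1})$. To avoid the clash with the number $k$ of even polytopes, the constant in the chamber formula will be written $K\gg 1$.

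The plan is to establish three things in turn: continuity together with Nash stationarity, that the only fixed points are $z_1,\dots,z_{2k+1}$, and an explicit Lyapunov function.

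\textbf{Continuity and stationarity.} Inside $P_{2i}$ we have $\phi(x)-x = \alpha(z_{2i}-x)$, which vanishes at $z_{2i}$. In a chamber $P^-_{2i+1}$ the displacement is the convex combination
\[
\phi(x)-x=\Bigl(1-\tfrac{d^-(x)}{w^-}\Bigr)\alpha(z_{2i}-x^-)+\tfrac{d^-(x)}{w^-}\,\tfrac{z_{2i+1}-x}{K}.
\]
When $d^-(x)=0$ this agrees with the even-polytope formula, and when $d^-(x)=w^-$ it equals $(z_{2i+1}-x)/K$, which agrees with the $+$ chamber on the shared hyperplane through $z_{2i+1}$. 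Hence $\phi$ is continuous, since $d^-$ and the projection $x\mapsto x^-$ are continuous. At $z_{2i+1}$ we have $d^-=w^-$, so $z_{2i+1}$ is fixed. This gives Nash stationarity.

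\textbf{Fixed points.} The displacement has two parts. The first, $\alpha(z_{2i}-x^-)$, lies in the hyperplane $H_i$, so its component along $w$ is zero. The second has $w$-component $(\pi(z_{2i+1})-\pi(x))/K$, which is strictly positive for $x$ in the interior of $P^-_{2i+1}$ and not on $H_i$, because $d^->0$ there. So no point of $P^-_{2i+1}\setminus(H_i\cup H_{z_{2i+1}})$ is fixed, where $H_{z_{2i+1}}$ denotes the hyperplane normal to $\ell$ through $z_{2i+1}$.
\begin{itemize}
\item On $H_i$ the only fixed point is $z_{2i}$.
\item On $H_{z_{2i+1}}$ the only fixed point is $z_{2i+1}$.
\end{itemize}
The extreme polytopes are handled the same way.

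\textbf{Lyapunov function.} Define $L$ by
\[
L(x)=\bigl|\pi(x)-\pi(z_{j(x)})\bigr| + c\,\|x-\pi\text{-slice target}\|,
\]
where $j(x)$ is the index of the equilibrium that attracts the slab containing $x$. Along $\ell$, each slab $[\pi(z_{2i}),\pi(z_{2i+1})]$ sends $\pi$ monotonically toward $\pi(z_{2i+1})$. Likewise, each slab $[\pi(z_{2i+1}),\pi(z_{2i+2})]$ sends $\pi$ toward $\pi(z_{2i+1})$. So the projected one-dimensional dynamic has the odd-indexed $\pi(z_{2i+1})$ as attractors and the even-indexed $\pi(z_{2i})$ as repellers.

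The cleanest route is to use the standard fact that a continuous map with finitely many fixed points, whose orbits all converge to fixed points and which has no cycles among fixed points, admits a Lyapunov function in the sense of (B), via Conley's theorem. So I would first prove convergence (A) and then show that the chain recurrent set is exactly $Z$. For convergence, $\pi(\phi^t(x))$ is monotone within its slab and bounded, so it converges. Once $\pi$ stabilizes, the dynamic is a contraction toward the relevant $z$.

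\textbf{Expected obstacle.} The main obstacle is twofold. First, I must ensure that orbits never cross the hyperplanes in the wrong direction, i.e.\ that overshoot is excluded. This needs $\alpha<1$ and $K\gg 1$ relative to the chamber widths, so that one step moves $\pi$ by less than the distance to $\pi(z_{2i+1})$. Second, I must rule out chain recurrence between $z_{2i}$ and its neighbours. Since $\pi$ is non-decreasing toward the attractors and strictly so off $\bigcup_i H_i$, $\pi$-distance to the nearest odd equilibrium serves as a Lyapunov function away from the $H_i$. Points of $H_i$ converge within $H_i$ to $z_{2i}$, and the repeller $z_{2i}$ is fixed, so the only recurrent points are the equilibria. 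Combining the $\pi$-based term with $\|x-z_{2i}\|$ on $H_i$ gives the required strictly decreasing $L$.
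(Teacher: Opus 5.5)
\textbf{Gap 1: the Lyapunov step.} Your route to (B) differs from the paper's. The paper only gestures at an explicit piecewise $L$ (distance to $z_{2i}$ on $P_{2i}$, ``a little harder'' on the odd polytopes). You go through Conley's theorem, which would even give a continuous $L$. But the one nontrivial input of that route, that the chain recurrent set is exactly $Z$, is not established. Your justification concerns true orbits: points of $H_i$ converge inside $H_i$, $z_{2i}$ is a fixed repeller, ``so the only recurrent points are the equilibria''. Chain recurrence instead concerns $\epsilon$-pseudo-orbits, which may leave $H_i$ and come back. Excluding this is the whole content of a continuous Lyapunov function, so the ``standard fact'' you cite is the statement to be proved, not a tool.

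Your candidate ``$\pi$-distance to the nearest odd equilibrium'' is also not monotone. Just to the right of $H_i$, if $w^+_{2i-1}<w^-_{2i+1}$, the nearest odd equilibrium is $z_{2i-1}$, and that distance increases along the orbit.

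The missing idea is a semiconjugacy. Read $d^-(x)$ as $|\pi(x)-\pi(z_{2i})|$. Then the $w$-component of the step depends only on $\pi(x)$, so $\pi\circ\phi=g\circ\pi$ for an increasing interval map $g$. On $P^-_{2i+1}$, with $u=\pi(x)-\pi(z_{2i})$, it is $g(u)=u+u(w^-_{2i+1}-u)/(w^-_{2i+1}K)$, and its fixed points $\pi(z_j)$ alternate between repelling and attracting. Since $\pi$ is $1$-Lipschitz, $\epsilon$-chains of $\phi$ project to $\epsilon$-chains of $g$. Hence chain recurrent points lie on the invariant hyperplanes $\pi^{-1}(\pi(z_j))$. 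An $\epsilon$-chain from such a point back to itself stays in an $O(K\epsilon)$-slab around that hyperplane. There $\phi$ is uniformly close to a contraction towards $z_j$, which leaves only $z_j$.

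Alternatively, Definition (B) as written does not require continuity. The open chambers and these hyperplanes are each forward invariant. So gluing the $\pi$-distance to the chamber's own attractor with $\|x-z_j\|$ on the hyperplanes already suffices; this is presumably what the paper's two-line sketch intends.

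\textbf{Gap 2: $\phi$ need not map $X$ into $X$.} Neither you nor the paper verifies $\phi(X)\subseteq X$. This is needed for $\phi$ to be a dynamic on $X$ at all, and for Conley's theorem. The chamber formula violates it, because the displacement $(1-s)\alpha(z_{2i}-x^-)$ is computed at the foot $x^-$ but applied at $x$.

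Take the $2\times 2$ game with payoffs $(1,4)$ at $(A,A)$, $(4,1)$ at $(B,B)$, and zero payoffs off the diagonal. In coordinates $(p,q)$ (the probabilities of $A$) its equilibria are $(1/5,4/5)$, $(0,0)$, $(1,1)$. The direction $w=(1,-0.9)$ projects them to $-0.52<0<0.1$ (up to scaling), so they become $z_1,z_2,z_3$. Then $H_1=\{p=0.9q\}$. Take $x=(t,0)$ with small $t>0$, which lies in $P^-_3$. One has $x^-=\tfrac{0.9t}{1.81}(0.9,1)$ and $s=d^-(x)/w^-_3=10t$. Hence $\phi(x)_q=t\bigl(\tfrac{10}{K}-\tfrac{0.9\alpha(1-10t)}{1.81}\bigr)<0$ once $K>21/\alpha$.

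Relatedly, your check that the two chamber formulas agree on the hyperplane through $z_{2i+1}$ silently uses the hyperplane reading of $d^-$. With the paper's literal $d^-$ (distance to $P_{2i}=H_i\cap X$) the weights can leave $[0,1]$. In this example $d^-(z_3)=0.1>w^-_3=0.1/\sqrt{1.81}$, so $z_3$ would not even be fixed. The construction therefore has to be repaired first, and the repaired map rechecked for spurious fixed points on $\partial X$, before your three steps apply.
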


The Lyapunov function for even polytopes is the distance to the equilibrium.  For odd polytopes it is only a little harder to define.

\section{Hardness Results for Nash-Convergence} \label{sec: hardness}
In this section we explain why locally tractable Nash-convergent dynamics do not exist for any dynamics of either of these types. We prove this by contradiction. First, we assume that such a dynamic exists. Then we use this hypothesized dynamic as a {\em black box} to solve a difficult problem, and thus show that two complexity classes that are not thought to be equal are equal. For type 1, we use $\Ptime$ vs $\PPAD$, and for Type 2 we use $\NP$ and $\RP$.  We conclude that there is a complexity obstacle to the existence of such a dynamic.  This style of complexity-theoretic argument is known as a {\em black-box proof\/} because it proceeds with no consideration to the precise details of how the dynamic is implemented. 
%Because we use uniform Nash-convergence, each of these dynamics assume the game is non-degenerate --- in each black-box proof, either we show a difficult problem can be solved or point out that the game is in fact degenerate. 
Finally, in Section~\ref{sec: impossibility for uniform} we prove a more general theorem showing that \emph{any} uniformly Nash-convergent dynamic (not necessarily Type 1 or 2) \emph{equipped with a locally tractable Lyapunov function} cannot be locally tractable unless $\CLS = \PPAD$.

For Type 1 dynamics, the argument requires the solution of a basic computational problem ({\em ``is there a Nash equilibrium on a given line?''}) addressed in the next subsection.  For Type 2, the argument is somewhat nontrivial and relies on the same algorithm. %For the third type, the proof is a full-blown reduction, albeit a straightforward one, in which we show that any dynamic of the required kind would establish the collapse of two complexity classes that are not believed to coincide, CLS and PPAD.  

%Finally, Type 4 seems even harder to prove intractable: {\em we have been unable} to find a black-box proof for this, most general, case. In the next section we show that there is a mathematical reason why a black-box proof of this impossibility result is unlikely to exist, and thus a novel kind of complexity argument may be needed to settle this most general impossibility conjecture.

%\textcolor{blue}{\bf At some points in the section, a clause ``or present evidence of degeneracy" should appear.  Where?}

\subsection{Nash Equilibria in an Affine Subspace}
The proofs of this section require the efficient solution of the following problem:  Given a line $x=A+B\cdot\lambda$ for vectors $A$ and $B$ in the mixed strategy space $X$, is there a Nash equilibrium on this line? More generally, given an affine subspace, can we find a Nash equilibrium lying on this subspace?  %In this subsection we explore two approaches to this interesting problem.   
Define {\sc Nash on an affine space} to be the following computational problem:  Given a game $G$ and an affine space $\mathcal A$ of dimension $d$ defined by the equations
$x_i=A_i + \sum_{j=1}^d B_{ij}\cdot \lambda_j,$
one for each strategy (dimension) $i$ of the game, find all Nash equilibria that lie in $\mathcal A$ or determine that none exist.  In this subsection we prove the following:

\begin{theorem} The {\sc Nash on an affine space} problem can be solved in polynomial time if $d$ is a constant, and is NP-hard if $d=\Omega(|G|^\delta))$ for any $\delta>0$, where by $|G|$ we denote the size of the representation of game $G$.  
\end{theorem}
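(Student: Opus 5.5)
For the polynomial part, I will work with two-player games, as the paper does, and enumerate over the combinatorial structure. A Nash equilibrium is determined by a pair of supports $S_1,S_2$ together with sets of best responses. Exhaustive support enumeration is exponential, so I will instead partition the $d$-dimensional parameter space $\Lambda=\{\lambda\in\real^d : A+B\lambda\in X\}$ by the hyperplane arrangement that decides the combinatorial type of a point.

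Restricted to $\mathcal A$, each player's payoff for a pure strategy $j$ against the opponent's mixed strategy is an affine function of $\lambda$. The arrangement therefore consists of two families of hyperplanes:
\begin{itemize}
\item the $O(n^2)$ hyperplanes $\{u_j(\lambda)=u_{j'}(\lambda)\}$ comparing pairs of pure strategies of the same player;
\item the $O(n)$ hyperplanes $\{x_i(\lambda)=0\}$.
\end{itemize}
With $m=O(n^2)$ hyperplanes in dimension $d$, the arrangement has $O(m^d)=\poly(n)$ faces when $d$ is constant. Its faces can be enumerated in polynomial time, by standard arrangement construction or by enumerating $d$-subsets of hyperplanes and solving small LPs.

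Within each face, the sign pattern is fixed. So the set of pure best responses of each player and the support of $x(\lambda)$ are constant across the face. Hence either every point of the face is a Nash equilibrium (support contained in best responses) or none is. Checking one relative-interior point per face, found by an LP, decides this. The output is the union of the equilibrium faces, which is a polyhedral description of all Nash equilibria in $\mathcal A$; in the nondegenerate case this is a finite list of points. For $p>2$ players, payoffs restricted to $\mathcal A$ are polynomials of degree $p-1$ in $\lambda$, and I would replace the hyperplane arrangement by a cell decomposition of constantly many variables (Renegar / Basu–Pollack–Roy), which is still polynomial for fixed $d$ and $p$. I would mention this only as a remark.

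For NP-hardness, I will reduce from an NP-hard Nash decision problem, for example Gilboa–Zemel's \emph{does a two-player game have a Nash equilibrium in which player 1's strategy $s$ has positive probability}, or a variant of Conitzer–Sandholm's reductions. The idea is to take a hard game $G'$ of size $N$ and pad it into a game $G$ of size $N^{1/\delta}$ with dummy strategies, for example strictly dominated ones. The padding is chosen so that the Nash equilibria of $G$ are exactly those of $G'$ with zero mass on the dummies. The affine space $\mathcal A$ fixes the dummy coordinates to $0$ and lets the $d=O(N)=O(|G|^\delta)$ real coordinates vary freely; it can additionally fix some coordinate constraint encoding the NP-hard property. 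Since asking whether any Nash equilibrium exists is trivial, the affine constraint must carry the hardness. To get this, I would use a reduction in which the question \emph{is there an equilibrium in which some linear equality holds?} is NP-hard. The Gilboa–Zemel / Conitzer–Sandholm SAT game has this form: a special equilibrium lies on a specific hyperplane (e.g.\ the outside option has probability $0$), versus the equilibria of a satisfying assignment. Intersecting that hyperplane with coordinate fixes gives $\mathcal A$.

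The main obstacle is this hardness step: finding a reduction where membership in an affine subspace, rather than a support condition or inequality, exactly captures the NP-hard question, and doing so robustly for the required dimension. I would first try the Conitzer–Sandholm game, where the pure equilibrium on the ``default'' strategy always exists and every other equilibrium puts probability $0$ on the default. There the hard question is existence of an equilibrium on the affine space $\{x_{\text{default}}=0\}$, and the padding argument then tunes $d$ relative to $|G|$.
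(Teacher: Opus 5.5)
Your proposal is correct. The hardness half follows the paper, and the polynomial half takes a different route.

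\textbf{Hardness.} The paper also appeals to Conitzer--Sandholm: it is NP-hard to decide whether some equilibrium has a prescribed coordinate equal to a prescribed constant. It reads that condition as a hyperplane and pads with irrelevant strategies to get $d=\Theta(|G|^\delta)$. Your hyperplane $\{x_f=0\}$ is a legitimate instance of this. Every equilibrium of that game other than $(f,f)$ comes from a satisfying assignment and gives $f$ probability $0$.

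\textbf{Polynomial part.} The paper enumerates only the faces of the polytope $X\cap\mathcal A$, i.e.\ the supports realized inside $\mathcal A$. There are $n^{O(d)}$ of them by the dual upper bound theorem, and they can be listed with polynomial delay. For each support it solves the linear indifference system, uses nondegeneracy to get a unique candidate, and tests that candidate for membership in $\mathcal A$. You refine this decomposition by also cutting with the $O(n^2)$ payoff-comparison hyperplanes. Then both the support and the best-response sets are constant on each cell, so one test per cell decides Nash-ness with no equation solving.

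\textbf{What each buys.} Your route costs more: $O(n^{2d})$ cells instead of the faces of a $d$-polytope with $O(n)$ facets, and it needs arrangement construction rather than polytope face enumeration. In return, it needs no nondegeneracy. It gives an exact polyhedral description even when the equilibria in $\mathcal A$ form a continuum; there the paper's per-support step would have to become an LP. It also extends to $p$ players via semi-algebraic cell decomposition, whereas ``linear equations per support'' is specific to two players.

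\textbf{Two small slips to fix.}
\begin{itemize}
\item The sentence placing the special equilibrium on $\{x_f=0\}$ is backwards. The equilibrium $(f,f)$ has $x_f=1$; the assignment equilibria are the ones on that hyperplane, as your last paragraph correctly says.
\item Padding to size $N^{1/\delta}$ makes $|G|^\delta=N$. That is roughly quadratic in $d\approx n'$, where $n'$ is the number of strategies of $G'$, so you get $d=o(|G|^\delta)$. Pad to size about $(n')^{1/\delta}$ instead.
\end{itemize}
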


%Notice that Mehta's proof above {\em does} involve the base case of the elimination of existential quantifiers, as it works by replacing the existentially quantified matrix inequality with the disjunction of unquantified inequalities.  

\iffalse
The utility function for $p$ players on the affine subspace is
$U(\lambda_1,\ldots,\lambda_d) = \sum_{i_1,\ldots,i_p} u_{i_1,\ldots,i_p}\prod_{k=1}^p(A_{i_k}+\sum_{j=1}^d B_{i_k,j}\lambda_j)$, and 
any Nash equilibrium of $G$ on the subspace is a KKT stationary point of the utility.  All these KKT stationary points can be computed through Tarksi elimination: It is known \cite{Basu} that, given $N$ polynomials in $V$ variables of maximum degree $D$ and a Boolean function on the signs of the polynomials that is existentially quantified over the variables, it can be decided in time $O((NDV)^{O(D)})$ whether the quantified Boolean formula is valid. Furthermore, the total number of solutions is $O(D^{O(D)})$.  Specifying the sought points can be expressed as a Boolean function, and the ponts can be further isolated Notice that, for the range of $D,N,V$ pertinent to the polynomials considered here, the bounds of this result on both time and number of solutions are polynomial in $|G|$ when $d=O(\frac{\log|G|}{\log\log |G|})$
\fi

\begin{proof}
Consider the intersection of the space of mixed strategies $X$ with the given $d$-dimensional affine space $A$.  It is a polytope of dimension $d$ and $O(n)$ facets, where by $n=O(|G|)$ we denote the total number of strategies of $G$.  By the dual of the upper bound theorem (see \cite{alon1985simple}) this polytope has $n^{O(d)}$ faces, and each of these faces is the intersection of $\mathcal A$ with a face of $X$. In addition, it is known that these faces can be enumerated with polynomial delay per output, see \cite{fukuda2020polyhedral}. Therefore, the Nash equilibria in $\mathcal A$ can be found by enumerating these faces in time $n^{O(d)}$ and, for each face --- corresponding to a possible support of mixed strategies in the game --- we test whether (a) the interior of the face contains a Nash equilibrium, achievable by solving a system of linear equations; and (b) whether this equilibrium --- guaranteed to be unique --- also belongs in the affine space $\mathcal A$.  The total time for this test is $n^{O(d)}$.  Since $d$ is a constant, this algorithm runs in polynomial time in $n$, the description of the game.

%\paragraph{NP-hardness.}
Now for the NP-hardness part of the Theorem:  When $d$ attains its largest possible value, one less than the dimension of $X$, we are seeking a Nash equilibrium of $G$ lying on a hyperplane --- that is, satisfying a linear equation.  This is known to be NP-hard, since it is NP-hard to test, even for two-player games, whether there is a Nash equilibrium in which a particular component of the mixed strategy of a particular player is a given constant $c>0$ \cite{conitzer2008new}.   To prove the result for $d =\Theta(|G|^\delta)$, we pad the game with $|G|^\frac{1}{\delta}$ irrelevant strategies.  Also, for intermediate values of $d$, it can be shown that the problem can be solved in subexponential time, and is NP-hard via subexponential reductions.  
\end{proof}

\paragraph{Remark.} When $d=1$ and the number of players is two, there is a clever {\em algebraic} algorithm for this problem due to Ruta Mehta (private communication, 2025):  As is well known, the Nash equilibria of any two-player game $G$ coincide with the symmetric Nash equilibria of the {\em symmetrization of $G$,} a game of the form $(A, A^T)$.  So, it suffices to consider the corresponding line in this space and find the symmetric Nash equilibria that it contains, if any.  These equilibria are the solutions of the following system of degree-two inequalities: $(Ax)_i \le x^T A x, i=1,\ldots,n$.
Since the sought equilibria are required to lie on a line parametrized by $\lambda$, we can reduce this system to a system of quadratic inequalities involving the single variable $\lambda$. By sorting the roots of these quadratic polynomials in $\lambda$, we can solve the system of inequalities and identify all Nash equilibria on the line, if any.  

We conjecture that Mehta's algorithm can be generalized to more than two players through Tarski's elimination of quantifiers for the theory of reals \cite{basu1996combinatorial}, providing an alternative proof of a stronger version of the positive direction of the theorem, in which the complexity is polynomial even if $d=O(\frac{\log |G|}{\log\log |G|})$, where $|G|$ is the size of the game description.  Notice that Mehta's maneuver of solving the quadratic equations and sorting the roots is essentially the elimination of a single existential quantifier.

\subsection{Type 1 Dynamics} 
We now turn to applying the affine space result  with $d=1$ --- that is, telling whether a Nash equilibrium lies on a line --- in order to prove complexity lower bounds for Nash convergent dynamics in games of these two types, starting with the easiest case of Type 1. 
\begin{theorem}
If a locally tractable Nash-convergent dynamic of Type 1 exists for all games, then $\Ptime = \PPAD$.
\end{theorem}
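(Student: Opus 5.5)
We turn the hypothesized dynamic into a polynomial-time algorithm for computing a Nash equilibrium; since finding a Nash equilibrium of a two-player game is \PPAD-complete, this gives $\Ptime = \PPAD$. Suppose that for every game $G$ there is a Type 1 dynamic $\phi_{G,z}$, for some equilibrium $z\in Z$ of $G$, whose displacement $\phi_{G,z}(x)$ can be computed in time polynomial in $|G|$ at any rational mixed profile $x$. The algorithm is as follows. Pick any rational profile $x_0\in X_G$, for example the uniform profile, and query the black box once to obtain $\phi_{G,z}(x_0)$. If $\phi_{G,z}(x_0)=x_0$, then $x_0$ is a fixpoint, hence a Nash equilibrium, and we stop. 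Otherwise, by the definition of Type 1, $z$ lies on the half-line
\[
\{x_0+\lambda(\phi_{G,z}(x_0)-x_0):\lambda\ge 0\},
\]
because the dynamic moves every point in a straight line toward $z$. Then run the $d=1$ case of the {\sc Nash on an affine space} algorithm from the previous theorem on this line. It returns, in polynomial time, all Nash equilibria on the line, and we output one with $\lambda\ge 0$. Such an equilibrium exists because $z$ is one.

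Two points need care. The first is that the algorithm must run in polynomial time. Only one oracle query is made, and the line passes through two rational points of polynomial bit length: $x_0$ and the oracle output, which is polynomial length because it is produced by a polynomial-time computation. The affine-space algorithm then enumerates the $O(n)$ faces of the segment $X\cap\ell$, which are its endpoints and its interior. For each face it solves a linear system for an equilibrium with the corresponding support and checks membership on the line. These systems have rational coefficients of polynomial size, so the output $z$ is rational and of polynomial size. In the nondegenerate two-player setting this matches the standard fact that equilibria have polynomial bit length.

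The second point, which I expect to be the main subtlety, is degeneracy and the exact form of the dynamic. Type 1 dynamics are defined for games whose equilibrium set $Z$ is finite, meaning nondegenerate games. The reduction therefore has to be run on a nondegenerate \PPAD-hard family of games. Such a family exists: 2-player Nash remains \PPAD-complete under a generic lexicographic or small rational perturbation that makes the game nondegenerate, and an equilibrium of the perturbed game yields one of the original game in polynomial time. If the oracle is applied to a degenerate game anyway, the algorithm can still check its output for being a Nash equilibrium, and a failure is simply reported.

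The "constant speed" in the definition is also harmless. Whatever the step length, including one that overshoots or is truncated near $z$ or at the boundary of $X$, the displacement $\phi(x_0)-x_0$ stays parallel to $z-x_0$, so the line through $x_0$ in that direction still contains $z$. Only this direction is used.
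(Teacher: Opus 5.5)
Your proposal is correct and follows the paper's own proof: query the black box once to get the direction from a starting profile toward $z$, then run the $d=1$ case of the {\sc Nash on an affine space} algorithm on that line to recover a Nash equilibrium in polynomial time. The detour through nondegenerate \PPAD-hard instances is harmless but unnecessary. The paper defines Type 1 dynamics for every game, and for two players, finding an equilibrium on a line reduces to one linear program per support pattern even in degenerate games.
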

\begin{proof}
Let $G$ be a game, and let $\phi$ be the assumed dynamic.  It is easy to see that we can use the circuit computing $\phi$, together with the algorithm for finding a Nash equilibrium on a line, to find a Nash equilibrium of $G$.  Since this PPAD-complete to do for arbitrary games, this completes the proof.  
\end{proof}

\subsection{Type 2 Dynamics}
\begin{theorem}
If a locally tractable Nash-convergent dynamic of Type 2 exists for all nondegenerate games, then $\NP = \RP$.
\end{theorem}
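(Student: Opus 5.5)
We prove the contrapositive: assuming a polynomial-size circuit $C_{G,\ell}$ computes one step of a Type 2 dynamic $\phi_{G,\ell}$ for every nondegenerate $G$, we build a randomized polynomial-time algorithm for an NP-hard question about Nash equilibria.

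We first fix the target problem. For nondegenerate two-player games, NP-hardness results in the style of \cite{conitzer2008new} show that deciding whether the game has \emph{more than one} Nash equilibrium is NP-hard (equivalently, whether some equilibrium exists other than a given one). We plan to use this ``another equilibrium'' version, together with the standard fact that the reductions can be made to output nondegenerate games, possibly after a small random perturbation. If degeneracy turns up, this will be handled as evidence that is output alongside the answer.

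Next we extract equilibria from the circuit. The key observation is that the dynamic \emph{reveals} equilibria. In every even-indexed polytope $P_{2i}$ the map has the form $\phi(x)=x+\alpha(z_{2i}-x)$. In the chambers of odd polytopes it is an explicit interpolation whose deepest part points straight at $z_{2i+1}$. So from $x$ and $\phi(x)$ we obtain a direction $\phi(x)-x$, and the line (half-line) $\{x+\lambda(\phi(x)-x)\}$ passes through an equilibrium whenever $x$ is in an even polytope or sufficiently inside an odd chamber. By the Theorem on {\sc Nash on an affine space} with $d=1$, we can find all Nash equilibria on that line in polynomial time and verify them exactly. Thus every query $x$ either yields a verified equilibrium, or it lies in an interpolation region, where the line misses all equilibria; we detect this, and it is harmless.

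The algorithm then samples random points $x\in X$, queries the circuit, and collects the equilibria found on the induced lines. Because the odd polytopes $P_1,P_3,\dots$ are full-dimensional and each contains an equilibrium $z_{2i+1}$ toward which a constant fraction of its volume points, we hope to show that a random point hits the ``pure'' core of a given odd chamber with non-negligible probability. Equivalently, following trajectories from random points for polynomially many steps, and running the line test at each step, eventually exposes the attracting equilibrium. If the game has at least two equilibria, it has at least three by Wilson's oddness theorem, and hence at least two full-dimensional odd polytopes $P_1$ and $P_3$. Any sample landing in the region of $P_3$ attracted to $z_3$ then yields an equilibrium different from one found in $P_1$. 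The algorithm accepts exactly when it finds two distinct verified equilibria. This gives one-sided error: it never accepts a game with a unique equilibrium, which is the required $\RP$ behavior.

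The main obstacle is the probability bound. The separating hyperplanes pass through projections of equilibria onto an adversarially chosen line $\ell$, so an odd polytope could in principle have exponentially small volume. I would attempt to get around this in two ways. First, the query points do not need to be uniformly random: from a single random start we iterate $\phi$ and apply the line test at every iterate, so the trajectory itself enters the even polytope or the deep chamber region near its limit. The difficulty then moves to guaranteeing that two different basins are reached. Second, we could exploit the fact that $\ell$ is fixed but unknown, by probing along random lines and bisecting on the boundary behavior where $\phi$ switches target. If neither works, it may be necessary to strengthen the promise, for instance by assuming polynomial bit-size separations or that $\ell$ is chosen at random, as the construction of $\ell$ suggests.
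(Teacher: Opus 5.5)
Your plan has a genuine gap: the extraction step fails for the dynamic as constructed, so the problem is not only the probability bound you flag. In the chamber $P^-_{2i+1}$ the step is
\[
\phi(x)-x=\alpha\,\frac{w^-_{2i+1}-d^-(x)}{w^-_{2i+1}}\,(z_{2i}-x^-)+\frac{d^-(x)}{k\,w^-_{2i+1}}\,(z_{2i+1}-x).
\]
There is no ``deep core'' that points straight at $z_{2i+1}$. Since $z_{2i}$ and $x^-$ both lie in $P_{2i}$, which sits in a hyperplane normal to $\ell$, the first vector has no component along $\ell$, while $z_{2i+1}-x$ does. So the line $\{x+\lambda(\phi(x)-x)\}$ passes through $z_{2i+1}$ only on the null set where $d^-(x)\in\{0,w^-_{2i+1}\}$ or $x^-=z_{2i}$. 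The even polytopes are $(d-1)$-dimensional, hence also null, and the paper's genericity argument handles the remaining equilibria. Consequently, when $G$ has several equilibria, a random query almost surely yields a line containing \emph{no} equilibrium. Your sampler then collects nothing and rejects every yes-instance of ``more than one equilibrium''. Iterating $\phi$ does not rescue this either: the iterates approach equilibria only asymptotically, so the exact line test keeps missing.

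The missing idea is that the miss you dismiss as ``harmless'' is itself the certificate. If $G$ has a unique equilibrium, $X$ is a single polytope and $\phi(x)=x+\alpha(z_1-x)$ everywhere. Every query line therefore passes through $z_1$, which the $d=1$ case of {\sc Nash on an affine space} detects. If $G$ has several equilibria, almost every query line misses all of them. The paper's algorithm therefore makes one random query and accepts (``more than one equilibrium'') iff the line through $x$ in direction $\phi(x)-x$ contains no Nash equilibrium. Unique-equilibrium games are never accepted, which gives the one-sided error required for $\RP$. This route needs no lower bound on the volume of any polytope, and hence does not depend on the adversarial choice of $\ell$. You never need to find two equilibria at all.
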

\begin{proof}
We shall prove that, given oracle access to the hypothesized dynamic $\phi$, one can decide with arbitrarily high probability whether the game $G$ has a unique Nash equilibrium, or more than one.  Since the problem is NP-hard \cite{conitzer2008new}, this would complete the proof\footnote{One may ask at this point: ``Does the problem of telling whether a game $G$ has more than one Nash equilibria remain NP-complete even for nondegenerate games?'' Because degeneracy is destroyed by arbitrarily small perturbations, we can modify the original NP-hardness reduction of \cite{gilboa_nash_1989} to this setting. This argument reduces the {\sc Clique} problem to the problem of testing for a Nash with given utility; because the utility bounds in this reduction are not tight, one can construct a \emph{randomized} form of the reduction by perturbing the utilites by a very small amount to make the instance nondegenerate. Interestingly, making this constructive would require the solution of an NP-hard problem. More broadly though, we don't need NP-completeness in nondegenerate games for our argument, because our type 2 construction applies to any game with finite equilibria, a larger class.}.  The test is simply the following:  Choose a point in $X$ at random, compute the direction of $\phi$ on this point, and test if there is a Nash equilibrium on this line.  If so, then the game almost certainly has a unique Nash equilibrium.  If not, the game has multiple equilibria. %\footnote{Notice that here we could instead perform a more demanding test, choose $k$ points and require that they all point to the same Nash equilibrium; however, this stronger requirement is not necessary for the argument.}.

The key properties of the dynamic of Type 2 that ensure the correctness of the algorithm are that, (a) if the game has one Nash equilibrium, $\phi$ heads directly towards it from all points in $X$, since there is only one polytope, $X$; and (b) if there are multiple equilibria, then in the interiors of all chambers $\phi$ is perturbed slightly from pointing to the Nash equilibrium of the chamber, in order to be reconciled with the dynamic at the next even polytope.  As a result, there are only two kinds of points in $X$ that can possibly point to equilibria: 
\begin{itemize}
\item Those that happen to lie on the boundary between two chambers, a set of measure zero; and

\item  For each chamber, and each orbit in the chamber, a set of points in a polytope that happen to point to a Nash equilibrium {\em in another polytope.} It is not hard to argue that this set of points is at most countable for each orbit in the chamber and each target Nash equilibrium, and thus also of measure zero.  
\end{itemize}

To add some detail to the last paragraph, consider the orbit of a generic point $x$ in an even polytope $P_e$ both in $P_e$ and in an adjacent odd polytope $P_o$. In $P_e$ the path is $x+t\cdot (N_e-x), 0\leq t\leq 1$, while in $P_o$ the motion may be $x+t^2\cdot (N_e-x), 0\leq t\leq 1$, where $N_o$ is the equilibrium on $P_o$ and $N_e$ is the equilibrium on $P_e$. Consider another equilibrium $N$, and the possibility that the motion's direction at some point $p$ of the latter orbit goes through $N$.  For this to happen, $N$ must be on the same 2-dimensional affine subspace defined by $x,N_o,N_e$, and even then, because of convexity, at most one of the uncountable points $p$ on the orbit would qualify.  Clearly, a vanishing fraction of the points of the curve point to $N$, and if we sum this vanishing fraction over the finitely many Nash equilibria we conclude that the measure of the points of $X_G$ from which the dynamic points to a Nash equilibrium is zero.  

This completes the proof.

\end{proof}

%Note that the argument can be easily formulated in terms of discrete computation and discrete probabilities.  Also, 

We suspect that, by a more specific construction of the dynamic, the complexity implication of this theorem can be strengthened to $\NP = \Ptime$.

\subsection{The Impossibility Theorem for Uniform Nash Convergence} \label{sec: impossibility for uniform}

By Definition~\ref{def: nash convergence}, any uniformly Nash-convergent dynamic must possess a Lyapunov function which is decreasing on non-Nash points, and constant on Nash equilibria that are fixpoints. From the previous section, we know that such dynamics exist in all games, but the ones we know happen to be intractable. We conjecture that no such dynamics can be locally tractable, and in this section we prove a slightly weaker form of this: if a uniformly Nash-convergent dynamic came equipped with a locally tractable Lyapunov function, then it cannot be locally tractable unless $\PPAD = \CLS$. In other words, a locally tractable uniformly Nash-convergent dynamic, if one existed, cannot provide a \emph{locally tractable witness to its uniform convergence}, in the form of a Lyapunov function.

We next make this intuition precise. We are given a polynomial $\pi$ and two fixed arithmetic circuits $f$ and $L$ which will compute the dynamic $\phi(x,t) := f^t(x)$ and its Lyapunov function $L(x)$ respectively. That is, when presented with a game $G$ and a strategy profile $x$ of $G$ with rational entries with $\pi(|G|)$ bits, the circuits will compute another rational profile $f(x)$ and the value of the Lyapunov function $L(x)\in [2^{\pi(|G|)}]$ with the promise that, for all non-Nash $x$, $L(f(x))>L(x)$.  It is further promised that any sequence of the form [$x,f(x),f(f(x)),\ldots]$ will end up after finitely many steps at a Nash equilibrium of $G$. We now show that a Nash-convergent dynamic cannot be locally tractable and possess a locally tractable Lyapunov function unless $\PPAD = \CLS$.

\begin{theorem} \label{CLS theorem}
Suppose that a pair $f$ and $L$ of arithmetic circuits exist where $f$ defines a locally tractable Nash-convergent dynamic $\phi(x,t) := f^t(x)$ for all games and $L$ defines a locally tractable Lyapunov function for $\phi$. Then $\PPAD = \CLS$. 
\end{theorem}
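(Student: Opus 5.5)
The plan is to reduce a $\PPAD$-complete problem to a $\CLS$ problem. I will take the problem of finding a Nash equilibrium of a two-player game $G$, given with rational payoffs. I will map it to an instance of the canonical $\CLS$-complete problem in its discrete, search-over-a-grid form. Concretely, I will use the intersection $\mathsf{PLS}\cap\PPAD$ characterization together with {\sc End-of-Potential-Line}, or equivalently a {\sc Continuous-Local-Opt}/{\sc PLS}-style formulation on a finite grid. The key observation is that the pair $(f,L)$ is exactly the data of a local search problem: $f$ is the neighbour (successor) function, $L$ is the potential, and local optima, meaning points where the potential fails to improve, are forced by the promise to be Nash equilibria. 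Since Nash is $\PPAD$-complete, this puts $\PPAD\subseteq\CLS$. The reverse inclusion $\CLS\subseteq\PPAD$ is known, so $\PPAD=\CLS$.

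The steps, in order, are as follows.
\begin{enumerate}
\item Fix the grid $X_\pi$ of mixed profiles of $G$ with $\pi(|G|)$-bit rational coordinates. By the promise, $f$ maps this grid to rational profiles and $L$ takes values in $[2^{\pi(|G|)}]$. Both are computable by polynomial-size circuits obtained by hardwiring $G$ into the fixed circuits $f$ and $L$.
\item Define the instance with successor $S(x)=\mathrm{round}(f(x))$, or simply $f(x)$ if $f$ keeps the grid invariant, and with potential $V=L$. The direction of the inequality follows the promise as written, $L(f(x))>L(x)$ for non-Nash $x$, so $L$ is increasing along the orbit.
\item Argue that every solution of the resulting $\CLS$ instance is, or efficiently yields, a Nash equilibrium. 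A solution is a point $x$ with $L(S(x))\le L(x)$, or a violation of the circuit promises. By the Lyapunov promise, such an $x$ must be Nash. Any "violation" type solution cannot exist under the promise, so it can be safely mapped to anything.
\item To land in a genuinely $\CLS$-complete formulation rather than merely $\mathsf{PLS}$, also supply a $\PPAD$-side structure. I would use the {\sc Either-Solution} characterization: pair this $\mathsf{PLS}$ instance with the standard $\PPAD$ instance for Nash (Lemke--Howson or the Brouwer/Nash map). A problem reducing to both $\mathsf{PLS}$ and $\PPAD$ lies in $\mathsf{PLS}\cap\PPAD=\CLS$. Since Nash already lies in $\PPAD$, it suffices to show Nash $\in\mathsf{PLS}$.
\item Conclude that $\PPAD$-complete Nash is in $\CLS$, hence $\PPAD\subseteq\CLS\subseteq\PPAD$.
\end{enumerate}

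The main obstacle is step 3 on the discrete grid. The dynamic $f$ may leave the $\pi(|G|)$-bit grid, in which case rounding could break the strict increase of $L$. The fixed points found may also be only grid points, not exact equilibria. My first attempt will be to read the hypothesis literally: $f$ maps $\pi$-bit profiles to profiles to which the promise again applies, and every orbit reaches an exact Nash equilibrium in finitely many steps. Then no rounding is needed, the start point can be any grid profile, and the $\mathsf{PLS}$ neighbourhood is just $\{f(x)\}$. If the grid issue persists, I would fall back on an approximate version, namely Nash with precision $\epsilon=2^{-\pi}$. Exact rational Nash equilibria of two-player games can be recovered from approximate ones in polynomial time by support identification and solving a linear system, as in the proof of the affine-space theorem above.
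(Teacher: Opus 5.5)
Your proposal is correct and is essentially the paper's proof. The paper also reduces two-player Nash to a $\mathsf{PLS}$-complete problem ({\sc Iter}), using $f$ as the successor and encoding each profile $x$ as the pair $(L(x),x)$ so that the integer order tracks $L$; it then concludes $\PPAD\subseteq\mathsf{PLS}$, hence $\PPAD=\PPAD\cap\mathsf{PLS}=\CLS$, and your generic local-search formulation (neighbourhood $\{f(x)\}$, potential $L$, local optima forced to be Nash by the promise) is the same reduction under the same implicit assumption that $f$ maps $\pi(|G|)$-bit profiles to $\pi(|G|)$-bit profiles, so the End-of-Potential-Line/Either-Solution detour in step 4 and the approximate-Nash fallback are unnecessary.
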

%To begin, we shall define precisely the terms of the theorem, starting with what it is meant by ``a locally tractable strongly Nash-convergent dynamic.''  This means that there is a  %For technical reasons, we shall also assume that, if $x$ is a Nash equilibrium of $G$, $\phi(x)$ is $x$ with an extra check mark symbol added.

First, recall that TFNP is the subclass of FNP consisting of all function problems that are total, in that every input has an output.  The problem of factoring, and the Nash equilibrium problem are examples.  The complexity class PPAD is the class of all problems in TFNP reducible to finding a Nash equilibrium, while the class PLS consists of all problems in TFNP that reduce to the following problem, called {\sc Iter}: We are given a Boolean circuit $C$ computing a function $C:[2^N]\mapsto [2^N]$ such that $C(0)>0$ and $C(i)\geq i$ for all $i$.  We are asked to find a $y\in [2^N]$ such that $C(y)>y$ and $C(C(y))=C(y)$.  It should be clear that this is a problem in TFNP.

Finally, CLS has been recently proved \cite{fearnley2021cls} to be exactly the intersection of PPAD and PLS (it was previously known to be a subset of this intersection).  CLS  is broadly believed to be different from both PLS and PPAD.

We now come to the proof.
\begin{proof}  
Under the assumption of the theorem, we shall reduce the problem of finding a Nash equilibrium of a given two-player game to {\sc Iter}.  
%Since the Nash equilibrium problem is PPAD-complete even for nondegenerate games --- the original reduction \cite{chen2006settling} is robust to small perturbations of the parameters --- 
It will follow that PPAD is a subset of PLS, and thus equal to CLS (the intersection of PLS and PPAD).

Given $G$, we construct an {\sc Iter} circuit $C:[2^{2\pi(|G|)+1}]\mapsto [2^{2\pi(|G|)+1}]$ as follows. We will encode mixed profiles and their Lyapunov function values as pairs $y_1,y_2$, where $y_2$ is a representation of the profile and $y_1$ represents $L(y_2)$. If the string $y\in [2^{2\pi(|G|+1}]$ is not of the correct form (that is, $y_2$ is a valid representation of a mixed profile and $L(y_2) = y_1$) then we set $C(y) = y$. Otherwise we define $C(y_1,y_2)=(L(f(y_2)),f(y_2))$, which is the successor profile $f(y_2)$ under the Nash-convergent dynamical system, accompanied by the Lyapunov function of $f(y_2)$. If $y_2$ is a Nash equilibrium, then $f(y_2) = y_2$ and so $C(y) = y$. Finally, we choose a mixed strategy $x_0$ of $G$ and set $C(0)$ to be $(L(x_0),x_0)$.

We claim that this is a valid reduction from the Nash equilibrium problem to {\sc Iter}, as needed.  All we need to show is that from any $y\in[2^{2\pi(|G|)}+1]$ with $C(y)>y$ and $C(C(y))=C(y)$ --- recall that we are guaranteed such an output --- solves the Nash equilibrium problem. That is $C(y)$ is a fixed point of $C$, but has a distinct predecessor $y$. If the string $y$ were an illegitimate encoding of a configuration, then we defined $C(y) = y$, so we conclude that $y = y_1,y_2$ must be a valid encoding of a mixed profile $y_2$ and its Lyapunov function value $y_1$. Then $C(y_1,y_2) = (L(z),z)$, where $z := f(y_2)$ is a mixed profile, and further $C(C(y)) = (L(f(z)), f(z)) = C(y) = (L(z),z)$, giving $f(z) = z$. Hence $z$ is a valid mixed profile that is a fixed point under our Nash-convergent dynamic $f$, which implies that it is a Nash equilibrium.

%Otherwise, suppose that $C(y)$ is not a Nash equilibrium of $G$ preceded by its Lyapunov function.  Then it must be an illegitimate encoding of a configuration, and yet it was produced from $y$ by $C$, which is impossible.
%. The point $y$ is a valid representation of a mixed strategy profile, as $C(y)\neq y$, and together this implies that the dynamic malfunctions on $G$, and hence the game is degenerate, with $y$ as a certificate.
\end{proof}

\section{The Impossibility Conjecture for General Dynamics}

We finally articulate the overarching open research problem proposed in this paper: 
\begin{conjecture}[The Impossibility Conjecture]
    If a locally tractable Nash-convergent dynamic exists for every game, then $\Ptime = \PPAD$.
\end{conjecture}

Naturally, $\Ptime = \PPAD$ is the strongest collapse we can hope to prove in the conclusion of this statement.  However, intermediate collapses, such as $\CLS = \PPAD$ (as in the previous theorem), are also very much worthy goals. This conjecture is the natural one to make given Theorem~\ref{CLS theorem}, yet it appears to be much more difficult to prove.

We believe that proving the Impossibility Conjecture would be a key step on the way of understanding game dynamics, as well as their connections to Complexity Theory.  Further, we believe that this conjecture will require for its proof a new style of mathematical reasoning, a sophisticated methodology that draws from Complexity Theory, Game Theory, and the topological theory of dynamical systems.  What is required here is a novel mathematical theory in which algorithms can be seen as topological objects.  

Next, we define a framework which, we believe, goes a long way towards capturing this difficulty formally. 

\subsection{An Obstacle to Proving the Impossibility Conjecture}

A black-box proof of the Impossibility Conjecture, if one existed, would proceed as follows. Let $\phi_G$ be a hypothesized locally tractable Nash-convergent dynamic, defined for each game $G$. Suppose that, for any game $G$, we could query $\phi_G$ at some sequence of points in polynomial time --- where the next query can be computed in polynomial time from the answers to the previous ones --- in such a way that, after a polynomial number of queries, we can compute a Nash equilibrium of $G$ based on the answers. If such an algorithm were proved to exist, we would conclude that $\Ptime = \PPAD$, proving the conjecture.

In this section we will argue that no such proof strategy will work, for the following intuitive reason: there are Nash-convergent dynamics that provide \emph{no information whatsoever} about the Nash equilibria within polynomially many black-box queries. %Therefore it is impossible to deduce whether the dynamic must be locally intractable. %Hence proving this conjecture will require assuming some kind of broader structure on the dynamic (as we did in Section [ref]).

\subsubsection*{\bf The Proving Game.}  First recall \cite{chen2006settling} that there is a polynomial $r$ such that the problem of finding a $\frac{1}{r(|G|)}$-approximate Nash equilibrium in a given two-player game $G$ is PPAD-complete.  We paraphrase this as follows:

\begin{lemma} 
If $\PPAD \neq \Ptime$, then for every polynomial-time algorithm there are infinitely many two-player games $G$ for which the algorithm will fail to compute a $\frac{1}{r(|G|)}$-approximate Nash equilibrium.  
\end{lemma}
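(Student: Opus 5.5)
The plan is to prove the contrapositive. Suppose some polynomial-time algorithm $\mathcal{B}$ fails to output a $\frac{1}{r(|G|)}$-approximate Nash equilibrium on only finitely many two-player games $G$. We will construct a polynomial-time algorithm that succeeds on every game. We then apply the PPAD-completeness result of \cite{chen2006settling} for $\frac{1}{r(|G|)}$-approximate equilibria to conclude that $\PPAD = \Ptime$.

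\textbf{Step 1: patch the finite exceptions.} Let $F$ be the finite set of games on which $\mathcal{B}$ fails. For each $G \in F$, fix one $\frac{1}{r(|G|)}$-approximate Nash equilibrium $x_G$. By Nash's theorem an exact equilibrium exists, so an approximate one exists too; moreover we may take $x_G$ to be rational, for instance an exact equilibrium of a two-player game, which is rational by the Lemke--Howson/linear-algebra characterization of supports. Define $\mathcal{B}'$ as follows: on input $G$, check whether $G \in F$ by comparing against a hard-coded finite table, and if so output $x_G$; otherwise run $\mathcal{B}$. The table has constant size, so $\mathcal{B}'$ runs in polynomial time and is correct on every instance. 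The algorithm is nonuniform only in the harmless sense that the finite table exists, which is standard because a polynomial-time algorithm may hard-code finitely many answers.

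\textbf{Step 2: invoke hardness.} The approximate-Nash problem with parameter $\frac{1}{r(|G|)}$ is PPAD-complete, and PPAD is closed under polynomial-time reductions. So a polynomial-time algorithm for it puts every PPAD problem in (function) $\Ptime$, giving $\PPAD = \Ptime$. This contradicts the hypothesis. Hence every polynomial-time algorithm must fail on infinitely many games.

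\textbf{Main obstacle.} The only subtle point is a quantifier issue. We must make sure that ``fails'' means outputting something that is not a $\frac{1}{r(|G|)}$-approximate equilibrium, and that the finite patch preserves polynomial running time. A second check is that verifying approximate equilibria is easy, so the problem really lies in TFNP and the completeness statement applies. Both are routine, so the argument is essentially immediate once the completeness theorem of \cite{chen2006settling} is assumed.
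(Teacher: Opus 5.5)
Your proposal is correct and takes the paper's route. The paper gives no separate argument: it presents the lemma as a direct paraphrase of the PPAD-completeness of computing $\frac{1}{r(|G|)}$-approximate equilibria \cite{chen2006settling}, and your contrapositive, with a hard-coded table for the finitely many exceptional games, is exactly the routine step that paraphrase leaves implicit. One small wording fix for Step 2: what you actually use is PPAD-\emph{hardness} of the problem, together with the fact that polynomial-time solvability is preserved under polynomial-time reductions, not closure of PPAD under reductions.
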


We next define a game denoted {\bf PG,} played between a polynomially-bounded {\em Black-box prover} Bob and a computationally unbounded {\em Adversary} Alice. To start, Bob chooses a polynomial $p$. Then a two-player game $G$ is chosen, and the following {\em run on game $G$} of {\bf PG} is played: Alice claims to possess a Nash-convergent discrete-time dynamic in $G$, and she challenges Bob to locate a Nash equilibrium of $G$ by querying her at points of the mixed strategy space of $G$. 
%Such a game is not straightforward to define, however, for the familiar reason that it is not clear what it means for Bob to come up with a Nash equilibrium in a game with more than two players, as such equilibria are known to have large complexity \cite{EY}.  

Bob and Alice play $p(|G|)$ rounds, during each of which Bob chooses, after a $p(|G|)$-time computation, a point $x$ in the domain $X$ of $G$ (presumably informed by Alice's answers so far), and Alice responds by revealing the step $\phi(x)$ of the claimed dynamic. If after the $p(|G|)$ rounds Bob {\em cannot} reveal a $\frac{1}{r(|G|)}$-approximate Nash equilibrium of $G$, and Alice {\em can} describe a legitimate Nash-convergent dynamic that is consistent with all of Alice's previous replies to the queries by Bob, then Alice {\em prevails} in the run of {\bf PG} on $G$; otherwise Bob prevails.  

Finally, Bob wins {\bf PG} if he prevails {\em in all runs}, for all games $G$.  Otherwise, Alice wins.  

We believe that {\bf PG} as defined above captures reasonably well the process of proving the Impossibility Conjecture by a black-box proof.
Hence, the following result can be seen as evidence that discovering a black-box proof of the Impossibility Conjecture is problematic and unlikely:

\begin{theorem}
If $\Ptime \neq \PPAD$, {\bf PG} is a forced win by Alice.
\end{theorem}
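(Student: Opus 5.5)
The plan is to give Alice an adaptive strategy that reveals nothing about the equilibria of $G$, and then invoke the Lemma. Alice first fixes, computationally unboundedly, a Nash equilibrium $z$ of $G$. She answers Bob's queries with a ``lazy'' dynamic: on the $j$-th query $x_j$ she returns a point $\phi(x_j)$ that depends only on $x_j$ and on publicly computable data, and not on $z$. A concrete choice is $\phi(x)=x+\epsilon\cdot(c-x)$ for a fixed simple profile $c$, say the uniform profile, with a small step $\epsilon$. The only exception is when $x_j$ is itself a $\frac{1}{r(|G|)}$-approximate equilibrium; there she can answer anything, because Bob will then prevail regardless. The point of this choice is that the entire transcript can be simulated in polynomial time without knowing $z$.

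The argument then runs in two steps.

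\emph{Step 1 (Bob gains nothing).} Since Alice's answers are polynomial-time computable from the queries alone, Bob together with Alice's replies forms a single polynomial-time algorithm whose output depends only on $G$. By the Lemma, if $\Ptime\neq\PPAD$ there are infinitely many games $G$ on which this combined algorithm fails to output a $\frac{1}{r(|G|)}$-approximate Nash equilibrium. In particular there is at least one such game. So on some run Bob cannot reveal an approximate equilibrium, and in the same run none of his queried points were approximate equilibria.

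\emph{Step 2 (Alice can justify her answers).} Alice must exhibit a legitimate Nash-convergent dynamic on $X$ that agrees with her $p(|G|)$ replies at the finitely many queried points. I would build it as a finite-support interpolation. Take a continuous map $\psi$ that equals the prescribed values $\phi(x_j)$ at the queried points, by placing small disjoint balls around the $x_j$ that avoid the Nash equilibria (possible since the $x_j$ are not approximate equilibria). Outside the balls, let $\psi$ be the Type 1 map $x\mapsto x+\alpha(z-x)$. Inside each ball, interpolate continuously between the prescribed value at the centre and the Type 1 map on the boundary. It then remains to check two things: that $\psi$ maps $X$ into $X$, which holds by convexity if all ingredients are convex combinations, and that every orbit converges to $z$ (or to some equilibrium) with no new fixed points. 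Any orbit enters the balls only finitely often if the interpolation is chosen so that each ball is left after one step and its image lies outside all balls. Beyond that, the orbit follows Type 1 and converges to $z$. If uniform convergence (type (B)) is desired, $\epsilon$ and the ball radii can be chosen so that the distance to $z$ is still nonincreasing with a controlled correction.

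The main obstacle is Step 2's bookkeeping: ensuring the patched map has no spurious fixed points or cycles through the finitely many modified regions. The answer must also not depend on the order of the queries, since Alice's replies are fixed before she knows which game instance Bob fails on. My first attempt is to choose $\epsilon$ tiny, so each reply is within $\epsilon$ of its query. The balls can then be taken with radius comparable to $\epsilon$, and the orbit leaves the finitely many balls permanently once it moves toward $z$. Genericity (perturbing the radii) handles orbits that might re-enter a ball. Combining Steps 1 and 2 shows that for some game Alice prevails, so Bob does not prevail in all runs and Alice wins \textbf{PG}.
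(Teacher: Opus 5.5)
Your Step 1 matches the paper's use of the Lemma, but the strategy you give Alice is not winning. The reply rule $\phi(x)=x+\epsilon(c-x)$ has the fixed point $\phi(c)=c$. Bob can simply query $c$: it is public, and even if it were secret, two replies reveal it as the intersection of the lines through $x_j$ and $\phi(x_j)$. If $c$ is a $\frac{1}{r(|G|)}$-approximate equilibrium, Bob reveals it. Otherwise your reply forces every dynamic consistent with the transcript to fix the non-Nash point $c$, so its orbit never converges to an equilibrium and you cannot exhibit a legitimate Nash-convergent dynamic. Either way Bob prevails in every run. Your claim in Step 2 of ``no new fixed points'' fails exactly here, and your exception clause only covers approximate equilibria.

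The missing requirement is that Alice's reply rule must be polynomial-time simulable \emph{and} must never hand Bob a non-Nash fixed point or a closed chain $x_j\mapsto\phi(x_j)=x_k\mapsto\cdots\mapsto x_j$. This is why the paper answers with $x+\gamma\beta(x)$, where $\beta$ is the Brown--von Neumann--Nash direction. Its fixed points are exactly the Nash equilibria, which Bob cannot find. Near any non-approximate equilibrium, some coordinate of $\beta$ has a fixed sign and inverse-polynomial size, so a short chain of such tiny steps cannot close up by itself.

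Step 2 has a second gap. You cannot arrange that the image of each ball lies outside all balls. That image contains $\phi(x_j)$, which Bob may query next. More seriously, the interpolation path from $\phi(x_j)$ towards the Type 1 image may pass through other query balls, and such ball-to-ball hops could close into a cycle. With $\alpha<1$, even Type 1 orbits can re-enter balls on their way to $z$, so ``leaves the balls permanently'' is also unjustified. Perturbing the radii does not help, because the danger is an incidence such as $x_k$ lying on, or exponentially near, the segment from $\phi(x_j)$ to $z$.

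The paper handles this in three moves:
\begin{itemize}
\item It adds the fact that no equilibrium lies within $2^{-2|G|}$ of a line through two Bob-computable points; otherwise a one-dimensional line search would give Bob an approximate equilibrium.
\item It uses the jump-to-$T$ map outside balls of radius $\gamma^2$, much smaller than the step $\gamma$.
\item It shows via the cycle equation that any approximate cycle would place $T$ near such a line.
\end{itemize}
Alternatively, since Alice is unbounded and $\dim X\ge 2$, with a jump-to-$z$ background she could route each interpolation path from $\phi(x_j)$ to $z$ around the finitely many other query points. Orbits would then follow the reply graph and reach $z$ in finitely many steps. That works only if the reply graph is acyclic, which again fails for your rule at $c$.
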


\paragraph{Remark:} Recall that {\bf PG} is of interest only if $\Ptime \neq \PPAD$, because the Impossibility Conjecture would be false if $\Ptime = \PPAD$ via Type 1 dynamics.
\begin{proof}
We shall describe a winning strategy for Alice. %The idea of Alice's strategy is as follows: For any game $G$, Alice will build a Nash-convergent dynamic $\Phi$ that is a mixture of our Type I dynamic $\phi_N$ (`go directly to a fixed Nash equilibrium')  and the flow of a standard game dynamic (the Brown-von Neumann-Nash dynamic) $\phi_D$ \cite{??}. 
In the run of the {\bf PG} on a game $G$, in the first, interactive stage of the Proving Game, %during which Bob chooses points in $X$, 
Alice responds to each of Bob's queries $x$ with $\phi(x)= x+ \gamma {\beta(x)}$. Here $\gamma$ is an exponentially small quantity in the size of the game, $\gamma = f(|G|)2^{-|G|}$, where $f$ is some polynomial that arises as a function of $r(|G|)$. The vector $\beta$ is the direction of a standard game dynamic called the {\em Brown-von Neumann-Nash dynamic}, defined for each strategy $a$ of player $i$ as follows \cite{sandholm2010population}:
\[
\beta(x)_a = \bigl[u_i(a, x_{-i}) - \bar u_i(x)\bigr]_+
-
x_i^a \sum_{b}
\bigl[u_i(b, x_{-i}) - \bar u_i(x)\bigr]_+
\]

\noindent Here, $u_i$ is the utility function of player $i$ and $\bar u_i$ the average utility weighted over all strategies, while the $[.]_+$ notation takes the max between zero and the argument.  Intuitively, this is a best response dynamic in which each player moves probability mass to the strategies that have above average utility. Importantly, its fixpoints are exactly the Nash equilibria. %Note that the precise choice of the dynamic is not important, and an array of alternatives would also work.

It is important to note that this gives Bob no new information, because Bob could compute $\phi(x)$ in polynomial time without input from Alice. In addition, the fixed points of this dynamic are exactly the Nash equilibria \cite{sandholm2010population}, so Bob cannot compute a fixed point in polynomial time.
%(several other choices of directions would work here in place of $\dot\beta$).   

Now notice that the interaction between Alice and Bob so far can be seen as a candidate deterministic algorithm with the game $G$ as input: the algorithm can output the polynomially many query points put forward by Bob, and check whether one of them is a $\frac{1}{r(|G|)}$-approximate Nash equilibrium.   {\em Our proof of the theorem will focus on the run of the {\bf PG} on the particular game $G$, guaranteed by Lemma 5.1, on which this algorithm fails,} and will establish that Alice will prevail in this game, completing the proof of the theorem.  Among the infinitely many such games, we shall choose one that is large enough so that certain asymptotic inequalities hold. This simplifies the analysis considerably.

Let $Q := \{x_1,x_2,\dots,x_k\}$ to be the set of \emph{query points} by Bob, with $|Q| \leq p(|G|)$. Similarly, let $\phi(Q)$ denote the set of successors of $Q$ under, that is $\phi(q_i)$ for all $q\in Q$. Note that Bob can compute all of $Q\cup \phi(Q)$ in polynomial time. We will use the following two facts about $Q$:

\begin{enumerate}
\item $Q\cup \phi(Q)$ contains no $\frac{1}{r(|G|)}$-approximate Nash equilibrium;

\item There is no pair of points $q,q'$ in $Q \cup \phi(Q)$ such that there is a Nash equilibrium within a distance of $2^{-2|G|}$ from the line between $q$ and $q'$.
%\item $Q$ contains no points $q,q'$ such that there is a Nash equilibrium within a distance of $2^{-2|G|}$ from the line defined by $q$ and $q'$.
\end{enumerate}

The first claim follows directly from the assumption that $\Ptime\neq \PPAD$. The second results from the following lemma:

\begin{lemma}
Bob cannot compute a pair of points $y$ and $z$ in polynomial time where a Nash equilibrium lies within a distance of $2^{-2|G|}$ from the line between $x$ and $y$.
\end{lemma}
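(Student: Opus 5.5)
The plan is to argue by contradiction, reducing the task of computing such a pair $y,z$ to computing an approximate Nash equilibrium. The tool is the polynomial-time algorithm for {\sc Nash on an affine space} with $d=1$, together with the PPAD-hardness of $\frac{1}{r(|G|)}$-approximate equilibria from Lemma 5.1. Suppose some polynomial-time procedure outputs $y,z$ such that a Nash equilibrium $N$ of $G$ lies within distance $2^{-2|G|}$ of the line $\ell$ through $y$ and $z$. We will then exhibit, in polynomial time, a $\frac{1}{r(|G|)}$-approximate Nash equilibrium of $G$. For the game $G$ fixed in the proof of the theorem, this contradicts the choice of $G$ via Lemma 5.1, and in general it gives $\Ptime=\PPAD$.

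First, let $N'$ be the orthogonal projection of $N$ onto $\ell$. Then $N'$ lies within $2^{-2|G|}$ of $N$ and, after a tiny clipping, inside $X$. Since the utilities lie in $[0,1]$ and are multilinear, any point within distance $\eta$ of an exact equilibrium is an $O(n\eta)$-approximate equilibrium. So every point on $\ell$ within $2^{-2|G|}$ of $N$ is a $\frac{1}{r(|G|)}$-approximate equilibrium once $|G|$ is large, and it suffices to find \emph{some} point of $\ell\cap X$ whose Nash regret is below $\frac{1}{r(|G|)}$.

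Second, parametrize $\ell$ by $\lambda$ and consider the regret function
\[ R(\lambda)=\max_i \max_a \bigl[u_i(a,x_{-i}(\lambda))-u_i(x(\lambda))\bigr]. \]
For two players each bracketed term is a quadratic in $\lambda$, so $R$ is the upper envelope of $O(n)$ quadratics on the segment $\ell\cap X$. Mehta's argument from the Remark, which sorts the roots of these quadratics, generalizes directly. Between consecutive breakpoints of the envelope, where both the maximizing index and the feasibility of each inequality are fixed, $R$ is a single quadratic, and its minimum can be computed in closed form. We compute these minima over all $O(n^2)$ intervals and output the $\lambda$ that attains the smallest value. Since a point with regret at most $O(n2^{-2|G|})$ exists on $\ell$, the global minimum over the segment is at most that value, and the output point is the required approximate equilibrium. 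Equivalently, one can apply the affine-space theorem with $d=1$ to the system ``regret $\le\varepsilon$'', which remains a one-variable semialgebraic system of quadratic inequalities.

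The main obstacle is bit complexity. The minimizers of these quadratics are rational in the input data, but the breakpoints are roots of quadratics and may be irrational. The fix is to output a rational rounding of the minimizer to $\poly(|G|)$ bits. Because $R$ is Lipschitz on $X$ with constant $\poly(n)$, this rounding perturbs the regret by only $2^{-\poly}$, which is absorbed by the slack between $O(n2^{-2|G|})$ and $\frac{1}{r(|G|)}$. One also has to confirm that the line passes through $X$ near $N$; it does, since $N\in X$ and the distance bound holds. With these checks, Bob computing such $y,z$ would yield a polynomial-time approximate-Nash algorithm, which is ruled out for the chosen $G$.
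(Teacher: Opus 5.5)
Your route is essentially the paper's: restrict the two-player equilibrium conditions to the line, get $O(n)$ quadratics in the single parameter $\lambda$, and extract a near-solution from their roots. Your regret envelope with rounding is a tidier version of the paper's ``near-solution'' of the symmetrized inequalities. However, one step fails as written.

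\textbf{The gap.} You search only over $\ell\cap X$ and claim that $\ell$ passes through $X$ near $N$ ``since $N\in X$ and the distance bound holds.'' That does not follow. The projection $N'$ of $N$ onto $\ell$ can lie outside $X$, since equilibria typically sit on the boundary of $X$. Clipping $N'$ into $X$ moves it off $\ell$, and $\ell\cap X$ can be far from $N$. For example, suppose player~1 plays a pure strategy $e_1$ at $N$. The line can run nearly parallel to an edge of player~1's simplex incident to $e_1$, pass at distance $2^{-2|G|}$ outside that vertex, and enter $X$ only near the midpoint of the edge. Then nothing guarantees that the minimum of $R$ over $\ell\cap X$ is small, so ``the global minimum over the segment is at most that value'' is unjustified.

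You cannot avoid this by reading ``the line between $y$ and $z$'' as the segment $[y,z]$, where convexity would keep the nearest point in $X$. In the application, $T$ lies near the extension of the line from $\phi(q_i)$ through $q_{i+1}$, roughly $1/\epsilon_i$ times farther out than $q_{i+1}$. So the full line is what is needed.

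\textbf{The repair.} The fix is short, and it mirrors the paper's allowance that the near-solution may slightly violate the inequalities, nonnegativity included.
\begin{itemize}
\item Since $y,z\in X$, the line $\ell$ lies in the affine hull of $X$. Minimize the polynomial extension of $R$ over the interval $\{\lambda : x(\lambda)_j\ge -2^{-2|G|} \text{ for all } j\}$. This interval contains $N'$, because $N\ge 0$ and $\|N'-N\|\le 2^{-2|G|}$.
\item $R$ vanishes at $N$ and is $\poly(n)$-Lipschitz on a bounded neighbourhood of $X$, so this minimum is at most $\poly(n)\,2^{-2|G|}$.
\item Zero out the negative coordinates of the minimizer and renormalize each player's distribution. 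This moves the point into $X$ by $O(n\,2^{-2|G|})$ and changes the regret by only $\poly(n)\,2^{-2|G|}$, still far below $\frac{1}{r(|G|)}$.
\end{itemize}
With this change, the rest of your argument, including the rounding of irrational breakpoints, goes through.
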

\begin{proof}
    Suppose that the line defined by $y$ and $z$ is $y+b\cdot\lambda$, and write the symmetrized equilibrium conditions $(Ax)_i \le x^T A x, i=1,\ldots,n$ for $G$, where $n$ is the number of strategies of $G$, in terms of the variable $\lambda$. The resulting system of quadratic inequalities in $\lambda$ can be rewritten as simple ordering conditions on the sorted roots of the resulting $n$ quadratic polynomials in $\lambda$. If there is a Nash equilibrium within $2^{-2|G|}$ of this line, then these inequalities would have a near-solution, one that violates some of them by at most $2^{-|G|}$ for some $\lambda$, and Bob can determine this near-solution by inspection of the inequalities.  That would give Bob a $\frac{1}{r(|G|)}$-approximate Nash equilibrium of $G$. 
\end{proof}

To define the required dynamic $\Phi$, we start by fixing $\Phi(x) = \phi(x)$ for every $x\in Q$. Next, we enclose each query point $q$ with a ball $B_q$ with center $q$ and radius $\gamma^2$ --- exponentially smaller than the step of $\phi$. Outside these balls, we define $\Phi(x)=x+(T-x)$ as a discrete dynamic of Type 1, which maps each point directly to a fixed Nash equilibrium which we call $T$. 

Now suppose that $x\neq q$ is a point inside $B_q$, and let $x'$ be the point on the boundary of $B_q$ that is closest to $x$. Then we define
\[
\Phi(x) = x + \epsilon(T-x') +(1-\epsilon_i)\gamma\beta(q)
\]
where $\epsilon_i = |x - q|/\gamma^2$ is the proximity to the center of the ball. Notice that, in general, the $B_q$ balls may overlap. In such a case, we will define $\Delta(x)$ by the following procedure. First, we partition $Q$ into subsets $K_i$, initially the singletons. Let $S_i$ be the smallest ball containing all points in $K_i$ --- we merge $K_i$ and and $K_j$ if these balls lie within $2\gamma^2$ of each other. Eventually we arrive at a partition of points where these spheres are disjoint, separated by at least $2\gamma^2$, and the diameter of each sphere is bounded by a polynomial in $\gamma^2$. Then finally we define $\Phi(x)$ as $x + \gamma \beta(x)$ within $S_i$, and in the $\gamma^2$ neighborhood of $S_i$ we interpolate with the type 1 dynamic as before. Because this ball is still small (a polynomial multiple of $\gamma^2$), the asymptotic analysis is the same, so for the sake of simplicity we shall proceed assuming that the balls $B_q$ do not overlap.

To prove that this is a legitimate dynamic, we first note that it is the linear combination of two continuous dynamics, and so it is continuous. The challenge of the proof is showing that all points converge to Nash equilibria --- we will show in fact that all points converge to $T$. By compactness, if any point does not converge to a Nash equilibrium, then there exists some non-Nash point which returns arbitrarily close to itself. We will rule out the possibility of such cyclic behavior.

Suppose for contradiction that there exists a set of points $x_1,\dots,x_k$, which form a $2\gamma^2$-\emph{approximate cycle}, that is $\Phi(x_i)$ lies within $2\gamma^2$ of $x_{i+1}$ for each $i$ (modulo $k$). We will show that the existence of such a cycle must violate one of our claims above about the query set $Q$.

First, observe that every point $x_i$ on this approximate cycle must lie within some ball $B_q$, because otherwise $\Phi(x_i) = T$. We write $q_i$ for the query point which lies within $\gamma^2$ of $x_i$. Secondly, because the radius of each ball is $\gamma^2$, we can assume that each ball contains at most two points on this cycle --- otherwise there is another shorter $\gamma^2$-approximate cycle. Hence this cycle contains at most $2p(|G|)$ points, all of which lie within $\gamma^2$ of some query point. We can express the cycle condition as the following equality
\begin{equation} \label{cycle equation}
    \sum_{i} (\epsilon_i (T-x_i') + (1-\epsilon_i)\beta(q_i)\gamma + \alpha_i)  = 0
\end{equation}
where $\alpha_i \leq \gamma^2$ are small vectors representing the difference between $\Phi(x_i)$ and $x_{i+1}$.

\emph{Claim:} For each $i$, $\epsilon_i \leq \poly(|G|) \gamma$.

Using \eqref{cycle equation}, for any $i$, $\|\epsilon_i(T-x_i')\| \leq \sum_i\|(1-\epsilon_i)\gamma\beta(q_i) + \alpha_i\| \leq \gamma \sum_i \|\beta(q_i)\| + \gamma^2 \leq \gamma \poly(|G|)$. Similarly, because $q_i$ is not approximate Nash, $\|T-x_i'\| \geq |\|T-q_i\| - \gamma^2| \geq \poly(|G|)$. Hence $\epsilon_i \leq \gamma \poly(|G|)$. In other words, every $\epsilon$ must be very small, so each $x_i$ is very close to $q_i$ within the ball. The ball has radius $\gamma^2$, so $\|x_i - q_i\|\leq \gamma^3\poly(|G|)$.

\emph{Claim:} There exists some $i$ and $k$ where $|\beta(x_i)_k |\geq \poly(|G|)$.

Define the \emph{relative deviation} $D_s^1(x,y) = (Ay)_s - x^TAy$ for player 1, with an analogous definition for player 2. A point $z = (x,y)$ is an $r(|G|)^{-1}$-approximate Nash equilibrium if each entry of $D_s(z)$ is bounded by $r(|G|)^{-1}$. Suppose that $D_s(z)$ is polynomially large. If $x_s$ is small (subpolynomial), then $\beta_s(z) = [D_s(z)]_+ - x_s\sum_r [D_r(z)]_+$ is at least polynomially large. On the other hand, if $x_s$ is also at least inverse polynomial, then $x_sD_s$ is also polynomial. Now observe that $\sum_i x_i D_i = 0$, and so there exists an index $r$ where $x_r D_r$ is a negative quantity of polynomial magnitude. Both terms are bounded, so each has at least polynomial magnitude. Further $x_r$ is positive, and so $D_r(z)$ is negative. Hence $|\beta_r(z)| = |[D_r(z)]_+ - x_r\sum_i [D_i(z)]_+| = |-x_r\sum_i [D_i(z)]_+| \geq x_rD_s(z)$  which is at least polynomial.

\emph{Claim:} Some $\epsilon_j \geq \gamma\poly(G)$.

We want to show that some $\epsilon$ is at least polynomially large compared to $\gamma$. We will use the fact that no query point is an approximate Nash equilibrium, and so some $|\beta(x_i)_k|$ is polynomially large. First, observe that the magnitude of each step is a polynomial function of $\gamma$, and so is exponentially small. Further, because $\beta$ is a polynomial function of $x$, the total change in $\beta$ over this polynomial-sized set of points is a polynomial in $\gamma$. In particular, if $|\beta(x_i)_k|$ is polynomially large for any point $x_i$, then $\beta(x_j)_k$ has the same sign for all points in this set.

Using \eqref{cycle equation}, we get that
\begin{align*}
    \gamma|\beta(x_i)_k| &\leq \sum_i \epsilon_i |(T-x_i')_k| + \sum_i\gamma^2 \\ 
    \gamma|\beta(x_i)_k| - p(|G|)\gamma^2 &\leq \sum_i \epsilon_i |(T-x_i')_k| \leq \sum_i \epsilon_i \\ 
\end{align*}
There are polynomially many points $x_i$, hence some $\epsilon_i$ must have magnitude at least $\gamma \poly(|G|)$, completing the claim.

To complete the proof, we will show for a contradiction that the Nash equilibrium $T$ lies approximately (within $2^{-2|G|}$) on the line between $\phi(q_i)$ and $q_{i+1}$, for the index $i$ where $\epsilon_i \geq\gamma \poly(|G|)$.

%Observe that $|T - q_{i+1}|\geq r(|G|)^{-1}$, because every query point is not approximate Nash. 
Observe that $x_{i+1} := \Phi(x_i) = x_i + \epsilon_i(T-x_i') + (1-\epsilon_i)(\phi(q_i) - q_i) = x_i-q_i + \epsilon_i(x_i' - q_i) + \epsilon_i T + (1-\epsilon_i)\phi(q_i)$. Further, recall that $x_{i+1}$ lies within $\epsilon^3\poly(|G|) $ of $q_{i+1}$. Using similar triangles, the distance between this line and $T$ is bounded by the distance between $q_{j+1}$ and the line between $\phi(q_j)$ and $T$, scaled by the ratio of the distances from $\phi(q_i)$ to $q$ and $T$ respectively. 

We know that $z := \epsilon_i T + (1-\epsilon_i)\phi(q_i)$ is a point lying on the line between $\phi(q_i)$ and $T$, $x_{i+1} - z= x_i-q_i + \epsilon_i(x_i' - q_i)$, which is bounded by $2\gamma^3\poly(|G|)$. Further, the distance between $\phi(q_i)$ and $T$ is at most a constant. Finally, the distance between $\phi(q_i)$ and $q_{i+1}$ can be lower-bounded by

\begin{align*}
    \|\phi(q_i) - q_{i+1}\| &\geq |\|\phi(q_i) - x_{i+1}\| - \|x_{i+1}-q_{i+1}\||  \geq |\|\phi(q_i) - x_{i+1}\| - \gamma^3\poly(|G|)| \\
    &\geq |\|\phi(q_i) - x_i +q_i - \epsilon_i(x_i' - q_i) - \epsilon_i T - (1-\epsilon_i)\phi(q_i)\| - \gamma^3\poly(|G|)| \\
    &\geq |\epsilon_i\|\phi(q_i) - T\| - \gamma^3\poly(|G|)| \\
    & \geq \gamma \poly(|G|)
\end{align*}
where we used the fact that $\epsilon_i \approx \gamma \poly(|G|)$.

Finally, we get that the distance between $T$ and the line from $\phi(q_i)$ to $q_{i+1}$ is at most $2\gamma^3\poly(|G|) / (\gamma \poly(|G|)) = \gamma^2\poly(|G|)$. We picked $\gamma $ to be a polynomial function multiplied by $2^{-|G|}$, and we choose the polynomial so that this expression is at most $2^{-2|G|}$. This contradicts our claim about $Q$, completing the proof.

\end{proof}

\section{Discussion}

All known locally tractable dynamics are not known to be Nash-convergent; all known Nash-convergent dynamics, such as those we discussed here, are not locally tractable and are therefore unsuitable as models of player behavior. 
%We showed that Nash-convergent dynamics exist in all games,  implying that degeneracy is necessary for the Impossibility Theorem \cite{milionis_impossibility_2023} to hold. However, all such known dynamics are intractable to compute locally, and therefore unsuitable as models of player behavior.
The overarching remaining open question is:

\begin{conjecture}
(The Impossibility Conjecture) If for all games there is a locally tractable Nash-convergent dynamic, then $\PPAD = \Ptime$.  
\end{conjecture}
One of the main results of this paper is the proof that a uniformly Nash-convergent dynamic cannot exist if it also possesses a locally tractable Lyapunov function, conditional on the weaker assumption $\PPAD \neq \CLS$.  

We believe that the Impossibility Conjecture is very plausible, chiefly because of the decades of the pursuit by a large community of economists, mathematicians, and computer scientists of an efficient and Nash converging dynamic in games. Proving it is an important research goal to pursue --- however, the Proving Game casts doubts on the likelihood of a proof appearing soon.  

Taken together, the preceding discussion summarizes fresh evidence that Nash learning is not an easy matter, even in the absence of degeneracy.  In view of this, we believe that the Nash equilibrium becomes even less attractive than we knew before, and alternative  solution concepts in Game Theory should be considered, presumably ones that are compatible with and informed by computation. 
The solid motivation and rich theory of dynamical systems as models of the behavior of rational agents playing a game point to such an approach, first anticipated in \cite{papadimitriou_game_2019}: Adopt the replicator dynamic \cite{sandholm2010population} --- gradient descent on the players' utilities, or equivalently the small step limit of the multiplicative weights update algorithm \cite{arora_multiplicative_2012} --- as the natural model of behavior of the players, and define the meaning of the game as the limit of this behavior: 

\medskip
\noindent{\em
The meaning of a game $G$ is a function $\mu_G$ that maps any prior distribution over the mixed strategy profiles of $G$ to the limit distribution over the attractors of the replicator dynamic in $G$.
}
\medskip 

In other words, the meaning of a game is the way in which repeated play of the game transforms the correlated behavior of the players.  This point of view motivates the following two important, and challenging, mathematical problems, which we believe are central to the development of a computational theory of learning dynamics:

\begin{conjecture}
There is a polynomial-time algorithm which, given any game $G$, outputs a description of the attractors of the replicator dynamics in $G$.  

\end{conjecture}

\begin{conjecture}
There is a polynomial-time algorithm which, given a game $G$ and a mixed strategy $x$, computes the limit attractor of the replicator dynamic starting at $x$. 

\end{conjecture}

Resolving these two important problems would provide a comprehensive, principled, and computationally accessible theory of the meaning of the game, radically reframing the bedrock of Game Theory in dynamical terms.  See \cite{biggar2026sink,biggar2026computing,biggar_attractor_2024,biggar_replicator_2023} for recent progress towards proving the first conjecture, and \cite{hakim2024swim} for the algorithmic solution of an approximate version of the second.

%\paragraph{Acknowledgment:} We thank Ruta Mehta and Mihalis Yannakakis for insightful discussions.  This research was supported by the National Science Foundation.

\bibliographystyle{alpha}
\bibliography{references}

\end{document}